\newcommand{\kket}[1]{|#1\rangle\rangle}
\newcommand{\bbra}[1]{\langle\langle#1|}
\newcommand{\hilb}[1]{\mathcal{#1}}
\newcommand{\setname}[1]{{\sf #1}}
\DeclareMathOperator{\Tr}{Tr}
\DeclareMathOperator{\spn}{span}
\newtheorem{dfn}{Definition}
\newtheorem{lmm}{Lemma}
\newtheorem{prop}{Proposition}
\newtheorem{cor}{Corollary}
\begin{document}
\title{Quantum conditional operations}

\author{Alessandro \surname{Bisio}}

\email{alessandro.bisio@unipv.it}

\affiliation{Quit  group, Dipartimento  di Fisica,  Universit\`a
  degli studi di Pavia, via Bassi 6, 27100 Pavia, Italy}

\affiliation{Istituto Nazionale  di Fisica Nucleare,  Gruppo IV,
  via Bassi 6, 27100 Pavia, Italy}

\author{Michele \surname{Dall'Arno}}

\email{cqtmda@nus.edu.sg}

\affiliation{Centre    for   Quantum    Technologies,   National
  University of Singapore, 3  Science Drive 2, 117543 Singapore,
  Republic of Singapore}

\author{Paolo \surname{Perinotti}}

\email{paolo.perinotti@unipv.it}

\affiliation{Quit  group, Dipartimento  di Fisica,  Universit\`a
  degli studi di Pavia, via Bassi 6, 27100 Pavia, Italy}

\affiliation{Istituto Nazionale  di Fisica Nucleare,  Gruppo IV,
  via Bassi 6, 27100 Pavia, Italy}

\begin{abstract}
  An  essential   element  of   classical  computation   is  the
  ``if-then''  construct,  that accepts  a  control  bit and  an
  arbitrary gate, and provides conditional execution of the gate
  depending on  the value of  the controlling bit. On  the other
  hand, quantum  theory prevents  the existence of  an analogous
  universal construct accepting a control qubit and an arbitrary
  quantum   gate  as   its  input.    Nevertheless,  there   are
  controllable sets of quantum gates  for which such a construct
  exists. Here  we provide a necessary  and sufficient condition
  for a set  of unitary transformations to  be controllable, and
  we give  a complete  characterization of controllable  sets in
  the two  dimensional case. This result  reveals an interesting
  connection  between the  problem  of  controllability and  the
  problem of extracting information from an unknown quantum gate
  while using it.
\end{abstract}

\maketitle

\section{Introduction}

One of  the key  features of any  programming language  is {\em
  conditional statements}, that run  an arbitrary gate depending
on the value of a  controlling variable. The Boolean ``if-then''
construct is fundamental to  break code sequentiality, it allows
the  implementation   of  conditional  loops  and   it  prevents
recursion from being infinite.

Also  in quantum  computation  controlled gates  play a  crucial
role. This  is the case, e.g.~for  the ubiquitous Controlled-NOT
(C-NOT),  which  is  the   pillar  of  most  quantum  algorithms
\cite{NC00}  (a  remarkable  example  is  the  Shor's  algorithm
\cite{Sho94},   that   relies   on   controlled   routines   for
period-finding).  The  crucial difference between  classical and
quantum  controlled  gates is  that  the  latter allow  for  the
control qubit  to be  in a superposition  of states.   This fact
leads to a further, radical difference with respect to classical
computation.  Indeed, while  the requirement that the  gate is a
variable   of---as  opposed   to  being   hard-coded  into---the
``if-then''  construct  can  be  trivially  implemented  in  the
classical world,  it turns out  to be impossible in  the quantum
one,  as  first noticed  by  Kitaev~\cite{Kit95}  more than  two
decades  ago.  As  it  was  recently shown~\cite{TGMV13,  Soe13,
  AFCB14},  quantum  theory  prevents the  implementation  of  a
universal  quantum ``if-then''  construct, namely  one that  can
control the entire  set of unitary gates, that  is, no black-box
transformation  can map  $U$ to  controlled-$U$ for  any unitary
$U$.   This is  a serious  limitation,  as it  implies that  one
should  provide a  different implementation  of the  ``if-then''
construct for each controlled gate in the algorithm.

On the other  hand, there exist sets  of unitary transformations
for which  the implementation of  a control is  possible. Simple
examples  are: i)  the sets  of jointly  perfectly discriminable
unitaries;  ii)   the  sets  such   that,  for  a   given  state
$\ket{\psi}$, one has $U_i\ket{\psi} \propto U_j\ket{\psi} $ for
all the unitaries in the  set. The latter example corresponds to
the   setting    of   Refs.~\cite{ZRKZPLO11,    ZKRO13},   where
\unexpanded{$\ket{\psi}$} is trivially the  vacuum state, and of
Refs.~\cite{FDDB14, FMKB15},  and shares similarities  with that
of Ref.~\cite{OGHW15}. It is also straightforward to notice that
any set  made of two unitary  transformations \unexpanded{$\{U ,
  V\}$}  is always  controllable, since  if \unexpanded{$|  \psi
  \rangle$}  is an  eigenstate of  \unexpanded{$U^\dag V$}  then
\unexpanded{$U | \psi \rangle \propto V | \psi \rangle $}.

Despite its  relevance and the  recent interest in  the problem,
the pivotal question remains unanswered:  {\em what are the sets
  of gates that quantum theory allows to be controlled through a
  conditional statement?}

Here,  we answer  this  question by  providing  a necessary  and
sufficient  condition  for   a  set  of  quantum   gates  to  be
controllable by a quantum ``if-then'' clause.  Surprisingly, our
result unveils a connection  between the controllability problem
and the task  of extracting information from  an unknown quantum
gate while using it.

\section{Formalization}

In  operational   terms,  a  general  reversible   quantum  gate
$\mathcal{U}$  consists of  a black  box transforming  its input
(left  wire)  into  its  output (right  wire).   The  action  of
$\mathcal{U}$  on  state  $\rho$  is represented  by  a  unitary
operator  $U$, i.e.   $\mathcal{U}(\rho)  =  U \rho  U^\dagger$,
$U^\dag U=UU^\dag  =I$ (where $I$ denotes  the identity matrix).
We say that the unitary  operator $U$ is a \emph{representative}
of the gate $\mathcal{U}$.  Clearly, a gate $\mathcal{U}$ admits
several  representatative unitaries,  differing by  a physically
irrelevant {\em global} phase $e^{i\phi}$.   In this work we use
the  term   \emph{unitary}  when   referring  to   a  particular
representative of a \emph{gate}.

For some particular choice of representative $U$, the controlled
gate C-$\mathcal{U}$ acts on the system as the identity operator
$I$  if the  control qubit  is initialized  in state  $\ket{0}$,
while  it  performs  $U$  if  the  control  qubit  is  in  state
$\ket{1}$, namely
\begin{align}
  \label{eq:cunitary}
  \begin{aligned}
    \begin{overpic}{lhs01}
      \put (30, 16) {$U$}
    \end{overpic}
  \end{aligned}
  = I\otimes \ket{0}\bra{0} +  U\otimes \ket{1}\bra{1}= U \oplus
  I.
\end{align}
Since a phase in front of $U$ in Eq.~\eqref{eq:cunitary} is {\em
  local}   rather  than   global,  different   choices  of   the
representative   for  the   same   gate  $\mathcal{U}$   clearly
correspond to physically inequivalent controlled gates.

To  address  the problem  in  the  most  general case,  we  must
consider a  generic map  that transforms the  gate $\mathcal{U}$
into its controlled version C-$\mathcal{U}$.  From the theory of
\emph{quantum  combs}~\cite{CDP08,  CDP09,  BCDP11} it  is  well
known that  the most  general transformation allowed  by quantum
theory on a quantum gate  $\mathcal{U}$ is realized by inserting
$\mathcal{U}$ in a quantum circuit  board.  Explicitly, if a map
which transforms  $\mathcal{U}$ into C-$\mathcal{U}$  exists, it
corresponds to the following  circuit (further details are given
in the Methods section):
\begin{align}
  \label{eq:controllable}
  \begin{aligned}
    \begin{overpic}{lhs02}
      \put (7, 6.2) {$0$}
      \put (30,30) {$A$}
      \put (56,56) {$U$}
      \put (80,30) {$B$}
    \end{overpic}
  \end{aligned}
  \; = \;
  \begin{aligned}
    \begin{overpic}{rhs02}
      \put (19,49) {$U$}
      \put (16,12) {$\psi_U$}
    \end{overpic}
  \end{aligned}\;,
\end{align}
in formula $B (U \otimes I) A  (I \otimes \ket{0}) = (U \oplus I
) \otimes \ket{\psi_U}$, where $0$ denotes the preparation of an
ancillary  ready state  $\ket{0}$,  $A$ and  $B$ denote  unitary
transformations, $U$  is a representative of  $\mathcal{U}$, and
$\ket{\psi_U}$ is a state that  depends on $U$.  It is important
to remark that the dimension of the Hilbert space in the circuit
in  Eq.~\eqref{eq:controllable}  is always  bounded~\cite{CDP09,
  BDPC11}.

The freedom in the choice of the representative, leading to many
inequivalent controlled gates, requires to split the formulation
of the controllability problem  into two sub-problems. The first
one regards controllability of a set of representatives
\begin{dfn}
  Given a set $\setname{S}$ of  gates and a set $\setname{R}$ of
  representatives of $\setname{S}$, we say that $\setname{S}$ is
  \emph{controllable  with   representatives  $\setname{R}$}  if
  there exists $A$ and $B$ such that Eq.~\eqref{eq:controllable}
  holds for any $U \in \setname{R}$.
\end{dfn}
The  second question  regards the  existence  of such  a set  of
representatives, as follows.
\begin{dfn}
  We say that a set $\setname{S}$ of gates is {\em controllable}
  if  there exists  a  set $\setname{R}$  of representatives  of
  $\setname{S}$  such that  $\setname{S}$ is  \emph{controllable
    with representatives $\setname{R}$}.
\end{dfn}

\section{Information without disturbance}

Equation~\eqref{eq:controllable} makes  it manifest  that, along
with the desired task of  controlling $U$, some side information
about  the unknown  unitary $U$  can in  principle be  stored in
state $\ket{\psi_U}$.  The study  of this side information plays
a fundamental role  in our analysis of  controllability. To make
this explicit, consider  the case in which the  control state in
Eq.~\eqref{eq:controllable} is set to $\ket{1}$. We have then
\begin{align}
  \label{eq:markable}
  \begin{aligned}
    \begin{overpic}{lhs03}
      \put (7, 6.2) {$0$}
      \put (7, 56) {$1$}
      \put (30,31) {$A$}
      \put (55,55) {$U$}
      \put (80,31) {$B$}
    \end{overpic}
  \end{aligned}
  \; = \;
  \begin{aligned}
    \begin{overpic}{rhs03}
      \put (20,80) {$1$}
      \put (17,46) {$U$}
      \put (15,12) {$\psi_U$}
    \end{overpic}
  \end{aligned}\;.
\end{align}
Equation~\eqref{eq:markable}    is    an   instance    of    the
information-disturbance  trade-off   problem  in   estimating  a
quantum transformation  \cite{BCDP10}.  Let  us suppose  that we
are provided  with a black box  implementing a single use  of an
unknown transformation  $U$ belonging  to a given  set $\setname
R$.   On   one  hand,   one  wants   to  identify   the  unknown
transformation $U$, while on the other hand one is interested in
applying the  black box on  a variable input state.   In general
these  two tasks  are  incompatible, and  there  is a  trade-off
between the amount  of information that can be  obtained about a
black  box and  the disturbance  caused on  its action,  with an
exception:  when the  black boxes  can be  jointly discriminated
without  error   \cite{Aci01,  DLP02,  DFY07},  and   then  also
reproduced.   The  circuit  in Eq.~\eqref{eq:markable}  fixes  a
scenario in which the unknown unitary must be left unperturbed.

Although  we defer  the details  to the  Methods section,  it is
important to notice here that without loss of generality one can
take the states  $\ket{\psi_U}$ and $\ket{\psi_V}$ corresponding
to unitaries $U$ and $V$ in Eq.~\eqref{eq:markable} to be either
proportional  or  orthogonal.   Indeed, since  the  linear  span
$\setname U$ of maps $\mathcal U$ is a finite dimensional space,
and the circuit in Eq.~\eqref{eq:markable} acts linearly on maps
$\mathcal U$,  also the  span of the  states $\psi_U$  is finite
dimensional,  with a  dimension  bounded by  $\dim \setname  U$.
Now, for  every pair  of unitaries  $U$ and  $V$, the  amount of
information provided  by the circuit  in Eq.~\eqref{eq:markable}
is a decreasing  function of $|\braket{\psi_U|\psi_V}|$. Suppose
that  for  a given  pair  $U,V$  the circuit  providing  maximum
information       about      the       pair      $U,V$       has
$0<\alpha:=|\braket{\psi_U|\psi_V}|<1$.   Then  by applying  the
same    circuit   twice    one    has   $\ket{\psi_U^{(2)}}    =
\ket{\psi_U}\otimes\ket{\psi_U}$,                            and
$|\braket{\psi_U^{(2)}|\psi_V^{(2)}}| =  \alpha^2<\alpha$, while
using  an ancillary  system  having the  same  dimension as  the
initial one. This  implies that the hypothesis  of optimality of
$\alpha$ is  absurd.  One  must then  have either  $\alpha=0$ or
$\alpha=1$. The argument can be repeated for every pair $U,V$.

This observation motivates the following definition.
\begin{dfn}[Markable set of unitaries]
  \label{def:markableset}
  Let $\setname{R}$ be a set  of unitaries, and let $\setname{P}
  := \{  \setname{P}_n \}$ be  a partition of  $\setname{R}$. We
  say  that $\setname  R$  is \emph{$\setname{P}$-markable},  if
  there exist unitaries $C$ and $D$  such that, for any $n$, any
  $U \in  \setname{P}_n$, and  some orthonormal set  $\{ \ket{n}
  \}$ one has
  \begin{align}\label{eq:markableset}
    \begin{aligned}
      \begin{overpic}{lhs04}
        \put (7,6) {$0$}
        \put (31,19) {$C$}
        \put (56,30) {$U$}
        \put (80,19) {$D$}
      \end{overpic}
    \end{aligned}
    \; = \;
    \begin{aligned}
      \begin{overpic}{rhs04}
        \put (28,67) {$U$}
        \put (30,14) {$n$}
      \end{overpic}
    \end{aligned}\;.
  \end{align}
  Similarly,  we  say  that  a set  $\setname{S}$  of  gates  is
  $\setname{P}$-markable if there exists  a set $\setname{R}$ of
  representatives      of       $\setname{S}$      which      is
  $\setname{P}$-markable.
\end{dfn}

For the  sake of  precision, we made  a distinction  between the
notions  of a  markable  set  of gates  and  a  markable set  of
unitaries.   However, this  distinction is  not substantial:  if
$\setname{R}$ is  a set of representatives  of $\setname{S}$ and
$\setname{R}$  is  $\setname{P}$-markable,  then any  other  set
$\setname{R}'$   of   representatives    of   $\setname{S}$   is
$\setname{P}$-markable.

It  is  worth  making  some   easy  considerations:  i)  a  {\em
  necessary}     condition    for     $\setname{R}$    to     be
$\setname{P}$-markable  is that  any  $U  \in \setname{P}_n$  is
perfectly discriminable from  any $V \in \setname{P}_m  , n \neq
m$, and ii) a {\em sufficient} condition for $\setname{R}$ to be
$\setname{P}$-markable   is  that   $\setname{S}$  is   made  of
unitaries that are jointly perfectly discriminable (in this case
$\setname{S}$  is   $\setname{P}$-markable  for   any  partition
$\setname{P}$). As we will prove later, none of these conditions
is  both necessary  and sufficient.

As proved in the Methods section, another simple, yet important,
property is the following one.

\begin{lmm}[Uniqueness of the minimal markable partition]
  \label{lmm:minpart}
  For any set $\setname{R}$ of  unitaries, there exists a unique
  \emph{minimal} partition $\setname{P}$ such that $\setname{R}$
  is   $\setname{P}$-markable    and   $\setname{R}$    is   not
  $\setname{P}'$-markable for  any refinement  $\setname{P}'$ of
  $\setname{P}$.
\end{lmm}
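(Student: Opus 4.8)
The plan is to prove the lemma in two stages: first, a closure property stating that markability is preserved under common refinements of partitions; second, a counting argument exploiting finite dimensionality to extract a unique finest markable partition. Throughout, I order partitions by refinement.

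For the first stage I would show that if $\setname{R}$ is both $\setname{P}$-markable (via unitaries $C_1,D_1$ with orthonormal labels $\{\ket{n}\}$) and $\setname{Q}$-markable (via $C_2,D_2$ with labels $\{\ket{m}\}$), then $\setname{R}$ is $(\setname{P}\wedge\setname{Q})$-markable, where $\setname{P}\wedge\setname{Q}$ is the common refinement whose blocks are the nonempty intersections $\setname{P}_n\cap\setname{Q}_m$. The construction I would use is a \emph{nesting} of the two marking circuits rather than a parallel composition, which would illegitimately require two uses of $U$. The crucial observation is that Eq.~\eqref{eq:markableset} guarantees that the $\setname{P}$-marking circuit reproduces the action of $U$ faithfully on its main output wire while simultaneously emitting the label $\ket{n}$; hence it implements the map $x\mapsto Ux$ plus the label, using a single use of $U$. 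Therefore the $(C_1,D_1)$ circuit can be inserted into the single $U$-slot of the $(C_2,D_2)$ circuit: one routes the wire that would be fed to $U$ in the second circuit through the first, which applies $U$ to it (consuming the only available use of $U$) and outputs $\ket{n}$, and returns the resulting wire—now carrying $U$ applied—to $D_2$, which completes the second marking and emits $\ket{m}$. Concretely, $C_3$ is obtained by applying $C_2$ to the input together with one ancilla and then $C_1$ to the $U$-slot wire together with a second ancilla, and $D_3$ by applying $D_1$ and then $D_2$ on the corresponding registers. The combined circuit reproduces $U$ and produces the label $\ket{n}\otimes\ket{m}$; since these product labels are orthonormal for distinct pairs $(n,m)$, this is exactly a marking for $\setname{P}\wedge\setname{Q}$.

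For the second stage I would invoke the finite dimensionality established in the information-without-disturbance discussion: the labels attached to distinct blocks are orthonormal side-information states, so the number of blocks of any markable partition is bounded by the dimension of the span $\setname{U}$ of the maps. Since the one-block partition is trivially markable (take $C=D=I$ and label $\ket{0}$), the family of markable partitions is nonempty, and by the bound it contains a partition $\setname{M}$ with the maximal number of blocks. Given any markable $\setname{P}$, the refinement $\setname{M}\wedge\setname{P}$ is markable by the first stage and has at least as many blocks as $\setname{M}$; maximality forces $\setname{M}\wedge\setname{P}=\setname{M}$, i.e.\ $\setname{M}$ refines $\setname{P}$. Thus $\setname{M}$ refines every markable partition and is the unique finest markable partition. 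The statement of the lemma then follows: $\setname{M}$ is markable by construction, and no proper refinement $\setname{P}'$ of it can be markable, since such a $\setname{P}'$ would be refined by $\setname{M}$ while also refining $\setname{M}$, forcing $\setname{P}'=\setname{M}$. Uniqueness follows identically, as any partition with the minimality property of the statement is markable, hence refined by $\setname{M}$, and having no markable proper refinement must coincide with $\setname{M}$. The main obstacle is the nesting construction of the first stage; the delicate point is verifying that routing the main wire of the inner circuit through the single $U$-slot yields a legitimate circuit using $U$ only once, which rests entirely on the faithful reproduction of $U$ guaranteed by Eq.~\eqref{eq:markableset}.
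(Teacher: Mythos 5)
Your proof is correct and rests on the same key idea as the paper's: the common refinement of two markable partitions is itself markable, obtained by nesting one marking circuit inside the single $U$-slot of the other (the paper phrases this as ``subsequently applying'' the two circuits, and the faithful-reproduction property of Eq.~\eqref{eq:markableset} is exactly what legitimises the nesting). Your second stage is somewhat more explicit than the paper's one-line existence claim, since you justify termination via the bound $\dim\spn\{\ket{\psi_U}\}\le\dim\setname{U}$ on the number of orthonormal labels, but the overall route is the same.
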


A  relevant feature  of the  information-disturbance problem  in
Eq.~\eqref{eq:markable}  is that  information about  the unknown
$U$ is available  only \emph{after} $U$ has been  applied to the
input state. A more restrictive scenario is the one in which the
outcome of  the estimation  is available \emph{before}  we apply
the unitary and it is described by the circuit
\begin{align}
  \label{eq:marklearn}
  \begin{aligned}
      \begin{overpic}{lhs05}
        \put (7, 16) {$\chi$}
        \put (24, 24) {$U$}
        \put (45, 16) {$C$}
        \put (85, 16) {$D$}
      \end{overpic}
  \end{aligned}
  \; = \;
  \begin{aligned}
      \begin{overpic}{rhs05}
        \put (10,9) {$n$}
        \put (74,9) {$U$}
      \end{overpic}
  \end{aligned}\;.
\end{align}
where the index  $n$ labels the element of a  partition of a set
of    unitaries     and    $\braket{n|m}     =    \delta_{n,m}$.
Eq.~\eqref{eq:markableset}   and~\eqref{eq:marklearn}  represent
two inequivalent conditions, since in  the second case one could
choose the input  state of $U$ depending on  state $\ket{n}$. An
estimation  without disturbance  with  the  second procedure  is
possible if and only if all the unitaries in the set are jointly
perfectly discriminable.

This result  is equivalent to the  following statement: \emph{if
  \unexpanded{$\{\setname{P}_n\}$} is the minimal partition of a
  set   of   unitaries  \unexpanded{$\setname{R}$}   such   that
  Eq.~\unexpanded{\eqref{eq:marklearn}}  holds,   then  all  the
  unitaries   in  a   subset  \unexpanded{$\setname{P}_n$}   are
  proportional (i.e.  they must  represent the same gate)}.  Let
us   suppose   that   \unexpanded{$\setname{P}_{n'}$}   contains
\unexpanded{$k$}    unitaries   \unexpanded{$\{    U_{i}   \}$},
\unexpanded{$i=1,\dots,  k$}.   The \unexpanded{$\{  U_{i}  \}$}
cannot   be  jointly   perfectly  discriminable,   otherwise  an
iteration   of  the   procedure  would   refine  \unexpanded{$\{
  \setname{P}_{n} \}$} which is minimal by hypothesis.  Then, if
\unexpanded{$\ket{\chi}$}       is       the      state       in
Eq.~\unexpanded{\eqref{eq:marklearn}}    (without     loss    of
generality, \unexpanded{$\ket{\chi}$} can be  assumed to be pure
\cite{CDP09}),   there  must   exist  \unexpanded{$U_i,U_j   \in
  \setname{P}_{n'}$}  such that  \unexpanded{$\bra{\chi}U_i^\dag
  U_j       \otimes      I       \ket{\chi}      \neq       0$}.
Eq.~\unexpanded{\eqref{eq:marklearn}}    implies   \unexpanded{$
  (I\otimes D) (C \otimes  I)(I \otimes U_i\otimes I) \ket{\chi}
  \ket{\psi}       =        \ket{n}\otimes       U_i\ket{\psi}$}
\unexpanded{$\forall  \ket{\psi}$},  and  by taking  the  scalar
product   with  \unexpanded{$i   \neq  j$}   we  easily   obtain
\unexpanded{$\bra{\chi} (U_i^\dagger U_j\otimes  I) \ket{\chi} =
  \bra{\psi}U_i^\dagger   U_j\ket{\psi}$}   \unexpanded{$\forall
  \ket{\psi}$}.    From    \unexpanded{$\bra{\chi}U_i^\dag   U_j
  \otimes  I  \ket{\chi}  \neq  0$},  we  have  \unexpanded{$U_i
  \propto U_j$}.

The problem of deriving the minimal partition $\setname{P}$ such
that    a   given    set   $\setname{R}$    of   unitaries    is
$\setname{P}$-markable is difficult in general.  Our main result
on  the   markability  of   unitaries  is  the   following  full
characterization of the  sets of markable unitaries  of a qubit.

\begin{prop}[Markable sets of qubit unitaries]
  \label{prop:qubitmarkableset}
  A     set    $\setname{R}$     of    qubit     unitaries    is
  $\setname{P}$-markable   with  respect   to  the   non-trivial
  bipartition $\setname{P}  := \{\setname{P}_0, \setname{P}_1\}$
  if and  only if $\spn(\setname{P}_0)  \cap \spn(\setname{P}_1)
  =\{0\}$   and  i)   either   both  $\spn(\setname{P}_0)$   and
  $\spn(\setname{P}_1)$  are  at  most two-dimensional,  or  ii)
  $\setname R$ is jointly discriminable
\end{prop}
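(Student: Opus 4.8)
The plan is to read Eq.~\eqref{eq:markableset} as the demand that the comb assembled from $C$ and $D$ convert a single use of the gate $\mathcal{U}$ into the very same gate carrying an orthonormal flag equal to the block index, $\mathcal{U}\mapsto\mathcal{U}\otimes\ket{n}\bra{n}$ for $U\in\setname{P}_n$. Because the excerpt already allows the side information to be chosen orthogonal or proportional, the flags attached to distinct blocks are genuinely orthonormal, whereas unitaries in the same block share a single flag. I would then organise the whole argument around the two physically distinct ways of producing such a flag: (a) a \emph{measurement-free} circuit that routes the data through the slot, so that $U$ acts directly on the input; and (b) a \emph{learn-and-reproduce} circuit that spends the single use of $U$ on a discrimination probe, decodes the outcome, and re-applies the identified gate to a stored input. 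The goal is to show that mechanism (a) is available precisely in case~(i), mechanism (b) precisely in case~(ii), and that the trivial-intersection clause is forced in either case.

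I would prove $\spn(\setname{P}_0)\cap\spn(\setname{P}_1)=\{0\}$ first, since it is insensitive to the realisation. Writing the comb action on a pure input as $D(U\otimes I)C(\ket{\phi}\otimes\ket{0})=U\ket{\phi}\otimes\ket{n}\otimes\ket{G_U}$, where $\ket{G_U}$ is the environment (input-independent, because the gate is unitary), and using linearity in $U$, a putative nonzero common element expressed once through block-$0$ and once through block-$1$ representatives would carry flag $\ket{0}$ and flag $\ket{1}$ simultaneously; orthonormality of the flags collapses both expressions to zero, and decoupling of the environment converts this into the linear-independence condition between the two blocks that underlies the trivial-intersection clause. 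The same linearity, applied to the flag measurement, reproves the auxiliary necessary condition that cross-block unitaries are perfectly discriminable, i.e.\ $\Tr(U^\dagger V)=0$ for $U\in\setname{P}_n$, $V\in\setname{P}_m$ with $n\neq m$.

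For mechanism (a) I would distil from Eq.~\eqref{eq:markableset} the isometry $J\colon\ket{\phi}\mapsto C(\ket{\phi}\otimes\ket{0})$ and the associated unital completely positive map $E(M)=J^\dagger(M\otimes I)J$ on qubit operators, and show that a measurement-free comb exists if and only if $E$ fixes every within-block product $U^\dagger U'$ and annihilates every cross-block product $V^\dagger U$. Here the qubit structure is decisive: a unital completely positive map on a qubit acts linearly on the Bloch vector, and I would invoke---and prove through the identity $H^2\propto I$ for traceless $H$, which places every fixed Hermitian element in the multiplicative domain---the rigidity statement that such a map whose fixed-point set is at least three-dimensional must be the identity. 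Consequently $E$ can fix the within-block products while still killing a nonzero cross-block product only when those products span at most a two-dimensional, hence dephasable, subalgebra; this is exactly the condition that $\spn(\setname{P}_0)$ and $\spn(\setname{P}_1)$ be at most two-dimensional, and the converse is settled by an explicit dephasing in the common eigenbasis together with $D=C^\dagger$.

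Mechanism (b) is the jointly discriminable case: I would build the comb by spending the single use of $U$ on the optimal joint-discrimination probe, coherently decoding the orthonormal outcomes, applying the now-known unitary to the stored data, and coarse-graining the outcome to the block flag; conversely, when a span is three-dimensional the rigidity of step (a) rules out routing the data through the slot, leaving joint discriminability as the only surviving possibility. The step I expect to be hardest is exactly this dichotomy---proving that a completely general comb must collapse onto one of the two clean mechanisms, with no hybrid circuit escaping the three-dimensional obstruction---closely followed by the bookkeeping that matches ``at most two-dimensional span'' with ``dephasable within-block algebra'' once adjoints of the representatives are taken into account. Granting these, the two constructions above and the elementary count $\dim\spn(\setname{P}_0)+\dim\spn(\setname{P}_1)\le 4$ imposed by the trivial-intersection clause close both directions.
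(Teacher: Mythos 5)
Your central computation is in fact the paper's. From Eq.~\eqref{eq:markableset} one reads off the linear map $T_A := D(A\otimes I)C(I\otimes\ket{0})$, observes that $D$ cancels in $T_A^\dagger T_B = E(A^\dagger B)$ with $E(M):=(I\otimes\bra{0})C^\dagger(M\otimes I)C(I\otimes\ket{0})$ unital and completely positive, and concludes that $E$ must fix every within-block product $U^\dagger V$ and annihilate every cross-block one. The paper runs exactly this argument written out on Pauli matrices: its contradiction $\sigma_y=iT_x^\dagger T_z=T_0^\dagger T_y=0$ is your rigidity statement instantiated, and your multiplicative-domain packaging (traceless Hermitian fixed points of a unital CP qubit map satisfy $H^2\propto I$, hence lie in the multiplicative domain, hence a three-dimensional fixed set forces $E=\mathrm{id}$) is a correct and slightly more abstract rendering of it. Your derivations of the trivial-intersection clause and of cross-block trace-orthogonality also match the paper's, and your case-(i) sufficiency circuit (dephasing $E$ in the common eigenbasis, $D=C^\dagger$) is the paper's C-NOT sandwich of Eq.~\eqref{eq:qubitmarkableset}.

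The gap is precisely the scaffolding you flag as hardest: the dichotomy between a ``measurement-free'' mechanism (a) and a ``learn-and-reproduce'' mechanism (b), and the claim that a general comb must collapse onto one of them. That dichotomy is neither needed nor provable in the form you state. The constraint that $E$ fixes within-block products and kills cross-block ones follows from Eq.~\eqref{eq:markableset} for \emph{every} comb, not only for circuits that ``route the data through the slot''; a learn-and-reproduce comb for a jointly discriminable set satisfies it as well (there $E$ is completely depolarizing and all flags are distinct). The paper therefore never classifies circuits: for necessity it simply assumes $\setname{R}$ is not jointly discriminable (otherwise clause (ii) holds vacuously) and applies the algebraic constraint uniformly, with a short sub-case analysis on the minimal partition. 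If you apply your $E$-constraint to all combs, your hardest step dissolves. Separately, your sufficiency argument for case (ii) ends with ``coarse-graining the outcome to the block flag''; a unitary $D$ cannot erase the fine-grained discrimination record, so this does not produce the single flag $\ket{n}$ shared by all elements of a block --- indeed your own necessary condition, applied to a block whose span is three-dimensional, forbids it. You should be aware that the paper's proof is silent on this direction too (its only explicit sufficiency construction is the case-(i) circuit), so this is a point where reproducing the paper's level of detail still leaves the claim unestablished rather than a defect unique to your write-up.
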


While  the  proof  of  necessity  is  rather  technical  and  is
therefore deferred  to the  Methods section,  it is  relevant to
provide  here  a  constructive proof  of  sufficiency.   Suppose
without loss  of generality that  $\spn(\setname{P}_0) \subseteq
\spn(\{I,   \sigma_z\})$   and  $\spn(\setname{P}_1)   \subseteq
\spn(\{\sigma_x, \sigma_y\})$.  By the circuit
\begin{align}
  \label{eq:qubitmarkableset}
  \begin{aligned}
    \begin{overpic}{lhs06}
      \put (8,6) {$0$}
      \put (56,31) {$U$}
    \end{overpic} 
  \end{aligned}
  \; = \;
  \begin{aligned}
    \begin{overpic}{rhs04}
      \put (28,67) {$U$}
      \put (30,14) {$n$}
    \end{overpic}
  \end{aligned}\;,
\end{align}
one   can    then   easily    check   that    $\setname{R}$   is
$\setname{P}$-markable.

Proposition~\ref{prop:qubitmarkableset}   suggests    a   simple
procedure   to  determine   the  existence   of  a   bipartition
$\setname{P}$ such that a set  $\setname{R} $ of qubit unitaries
is  $\setname{P}$-markable: i)  diagonalize  an arbitrary  $U\in
\setname{R}  $,  $U  \not  \propto I$,  ii)  check  whether  the
unitaries in  $\setname{R}$ are either diagonal  or off-diagonal
in the eigenbasis  of $U$, iii) if this is  not the case, repeat
step (ii) for the unitaries  in $U^\dag\setname{R}$.  If the set
is markable, the  minimal partition will be a  refinement of the
partition  $\{\setname{P}_0,\setname{P}_1\}$   corresponding  to
diagonal  and off-diagonal  elements, respectively.   If neither
step (ii)  nor step (iii)  provide a  partition, the set  is not
markable. Further  refinements are possible  if and only  if the
set  $ \setname{R}  $ is  either made  of either  three or  four
jointly discriminable unitaries.  To verify this condition, both
{$\setname      P_0$}      and       the      set      $\setname
P'_1:=U^{(1)\dag}_n\setname  P_1$  of  diagonal  unitaries  must
split  into  a  subset  proportional   to  the  identity  and  a
trace-less  one.   If so,  the  splittings  provide the  minimal
partition,   otherwise    $\{\setname{P}_0,\setname{P}_1\}$   is
minimal.

Proposition~\ref{prop:qubitmarkableset}  implies that:  i) joint
discriminability  is  not  necessary  for  $\setname{R}$  to  be
markable [contrarily  to the case in  Eq. \eqref{eq:marklearn}];
and  ii)   the  existence  of  a   bipartition  $\setname{P}  :=
\{\setname{P}_0,\setname{P}_1\}$  such   that  any   unitary  in
$\setname{P}_0$ is  perfectly discriminable from any  unitary in
$\setname{P}_1$ is not sufficient for $\setname P$-markability.

\section{Controllability}

Thanks to these preliminary considerations,  we are now ready to
state our main result on controllability.

\begin{prop}[Necessary    and     sufficient    condition    for
  controllability]
  \label{prop:control}
  Let  $\setname{R}$   be  a   set  of  unitary   operators  and
  $\setname{P} := \{ \setname{P}_n  \}$ be the minimal partition
  of     $\setname{R}$     such    that     $\setname{R}$     is
  $\setname{P}$-markable.  Then $\setname{R}$ is controllable if
  and only if there exists  a vector $\ket{\psi}$ such that, for
  any $n$ and any $U, V \in \setname{P}_n$, we have
  \begin{align}
    \label{eq:control}
    V^\dagger U \ket{\psi} = \ket{\psi}.
  \end{align}  
\end{prop}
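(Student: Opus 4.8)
The plan is to prove the two directions separately, reading the minimal markable partition as a bookkeeping of ``how much of $U$ is unavoidably exposed by a single use''. For \emph{sufficiency} I would argue constructively. Fix the unitaries $C,D$ witnessing that $\setname{R}$ is $\setname{P}$-markable (Definition~\ref{def:markableset}), and let $\ket{\psi}$ be the vector of the hypothesis, so that $U\ket{\psi}=:\ket{\phi_n}$ depends only on the block index for $U\in\setname{P}_n$. I would attach to the control qubit and target an extra probe register prepared in $\ket{\psi}$, and first SWAP target and probe conditioned on the control being $\ket{0}$; the slot fed to the single use of $U$ then carries the genuine target when the control is $\ket{1}$ and the fixed probe $\ket{\psi}$ when it is $\ket{0}$. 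Inserting the markability circuit $D(U\otimes I)C$ applies $U$ to that slot while writing the label into an orthonormal register $\ket{n}$. On the $\ket{0}$ branch the slot holds $U\ket{\psi}=\ket{\phi_n}$, which---depending on $U$ only through $n$---can be mapped back to $\ket{\psi}$ by a unitary controlled by $\ket{n}$ and by the control being $\ket{0}$; the $\ket{1}$ branch is left untouched. Undoing the conditional SWAP returns $\ket{t}$ on the $\ket{0}$ branch and $U\ket{t}$ on the $\ket{1}$ branch, i.e.\ exactly $U\oplus I$, with residual $\ket{\psi}\otimes\ket{n}$. Here the eigenvalue being \emph{exactly} $1$ is essential: it is what makes the uncomputation $U$-independent and keeps the $\ket{0}$ branch phase-locked to the identity.

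For \emph{necessity} I would start from a witness $B(U\otimes I)A(I\otimes\ket{0})=(U\oplus I)\otimes\ket{\psi_U}$ and first recover markability: by the information--disturbance reduction already used in the text the residuals can be taken pairwise proportional or orthogonal, so reading a label off $\ket{\psi_U}$ turns the witness into a markability circuit. The induced partition is therefore coarser than the minimal one (Lemma~\ref{lmm:minpart}), and within each $\setname{P}_n$ one has $\ket{\psi_V}\propto\ket{\psi_U}$. Writing $G:=A(I\otimes\ket{0})$ and projecting onto the control-$\ket{0}$ sector gives $B(U\otimes I)G\ket{0}\ket{t}=\ket{0}\ket{t}\ket{\psi_U}$; comparing the $U$ and $V$ equations yields $(V^\dagger U\otimes I)G\ket{0}\ket{t}=\lambda\,G\ket{0}\ket{t}$, where $\lambda$ is the phase relating $\ket{\psi_U}$ and $\ket{\psi_V}$. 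Thus on the marginal support $S_0$ of the control-$0$ probe one already has a \emph{common} eigenvector, $V^\dagger U\ket{s}=\lambda\ket{s}$---but only with a phase eigenvalue.

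The heart of the argument, and the step I expect to be the main obstacle, is to turn this phase into an honest eigenvalue-$1$ condition and to do so \emph{simultaneously} for all pairs. The missing constraint comes from the control-$\ket{1}$ sector, where the \emph{same} residual reappears: $B(U\otimes I)G\ket{1}\ket{t}=\ket{1}\,U\ket{t}\otimes\ket{\psi_U}$. Setting $J\ket{t}:=G\ket{1}\ket{t}$ and again comparing $U$ with $V$ produces the intertwiner $(V^\dagger U\otimes I)J=\lambda\,J\,(V^\dagger U)$. Since $V^\dagger U\otimes I$ is a fixed unitary, its eigenvalues on the $J$-invariant subspace must lie in $\mathrm{spec}(V^\dagger U)$, forcing $\lambda\,\mathrm{spec}(V^\dagger U)=\mathrm{spec}(V^\dagger U)$; together with $\lambda\in\mathrm{spec}(V^\dagger U)$ (from $S_0$) this makes $\lambda$ a root of unity with $1,\lambda^{-1}\in\mathrm{spec}(V^\dagger U)$. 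Feeding the $\lambda^{-1}$-eigenvector of $V^\dagger U$ through $J$ then yields a vector killed by $V^\dagger U\otimes I-1$, whose marginal $\ket{\psi}$ obeys $V^\dagger U\ket{\psi}=\ket{\psi}$. This spectral-symmetry constraint is exactly what fails for $\{U,e^{i\theta}U\}$ (there $\mathrm{spec}=\{e^{-i\theta}\}$ and $\lambda\,\mathrm{spec}\neq\mathrm{spec}$ unless $\theta=0$), so it correctly excludes the non-controllable representatives. The genuinely delicate task that remains is to assemble a \emph{single} $\ket{\psi}$ valid for every same-block pair and every block at once: since $S_0$ and $J$ are $U$-independent, the phases organize into a cocycle $\lambda_{UV}=c_U/c_V$, and I would combine the two sector constraints with the minimality of $\setname{P}$ to rule out the residual freedom that would otherwise leave the block's joint $1$-eigenspace empty. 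Proving that this joint eigenvalue-$1$ vector always survives is where the real work lies.
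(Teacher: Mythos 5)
Your sufficiency argument is essentially the paper's own: a conditional swap routing either the target or a probe prepared in $\ket{\psi}$ into the single use of $U$, the markability circuit of Eq.~\eqref{eq:markableset} writing the block label $\ket{n}$, and a label-controlled uncomputation $T=\sum_n V_n^\dagger\otimes\ket{n}\bra{n}$ restoring the probe, with the eigenvalue being exactly $1$ (not a phase) keeping the identity branch unperturbed. That half is correct.

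The necessity direction, however, contains a genuine gap that you acknowledge but do not close. Your two intertwining relations, $(V^\dagger U\otimes I)J_0=\lambda J_0$ from the control-$\ket{0}$ sector and $(V^\dagger U\otimes I)J=\lambda J(V^\dagger U)$ from the control-$\ket{1}$ sector, do yield $1\in\mathrm{spec}(V^\dagger U)$ and a fixed vector \emph{for each pair separately}; but that fixed vector is the image under $J$ of a pair-dependent eigenvector of $V^\dagger U$, and nothing you say forces these vectors to coincide across pairs, let alone across blocks. Since Eq.~\eqref{eq:control} asserts a \emph{single} $\ket{\psi}$ for all same-block pairs simultaneously, the step you defer to ``combining the two sector constraints with minimality'' is the entire content of the claim, and the proof is incomplete without it. The paper sidesteps the difficulty rather than confronting it: it first uses the iteration/optimality argument behind Definition~\ref{def:markableset} to normalize the residuals so that every $U\in\setname{P}_n$ outputs literally the \emph{same} vector $\ket{n}$ (no $U$-dependent phase --- precisely the cocycle $\lambda_{UV}=c_U/c_V$ you are fighting is set to $1$ at the outset); it then fixes one $U$-independent vector $\ket{\chi}$ with $(I\oplus U)\ket{\chi}=\ket{\chi}$ (any vector in the identity sector of the controlled gate works for all $U$ at once) and sets $\ket{\Psi}:=A(\ket{0}\ket{\chi})$. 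The controllability equation then reads $(I\otimes V^\dagger U)\ket{\Psi}=\ket{\Psi}$ for every same-block pair with the \emph{same} $\ket{\Psi}$, and a single partial projection $(\bra{a'}\otimes I)\ket{\Psi}$ produces the universal $\ket{\psi}$. If you wish to keep your spectral route, you must actually prove $\lambda_{UV}\equiv 1$ (or justify the residual normalization as the paper does); showing only that $1$ lies in each $\mathrm{spec}(V^\dagger U)$ is strictly weaker than the proposition.
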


While  the  proof  of  necessity  is  rather  technical  and  is
therefore deferred  to the  Methods section,  it is  relevant to
provide here a  constructive proof of sufficiency.   Let $C$ and
$D$   be   the   unitaries   that   realize   the   circuit   in
Eq.~\eqref{eq:markableset}    for    the    minimal    partition
$\setname{P}:=   \setname{P}_n$  such   that  $\setname{R}$   is
$\setname{P}$-markable.  Then, one can verify that the following
circuit controls the set $\setname{R}$:
\begin{align}
  \label{eq:controlcircuitgen}
  \begin{aligned}
    \begin{overpic}{lhs08}
      \put (4,16) {$\psi$}
      \put (4,3) {$0$}
      \put (15.5,21) {$S$}
      \put (27.5,9) {$C$}
      \put (40,15) {$U$}
      \put (52,9) {$D$}
      \put (64.5,21) {$S$}
      \put (83,9) {$T$}
    \end{overpic}
  \end{aligned}
  \; = \;
  \begin{aligned}
    \begin{overpic}{rhs08}
      \put (15,8) {$n$}
      \put (15,38) {$\psi$}
      \put (13.5,65) {$U$}
    \end{overpic}
  \end{aligned}\;,
\end{align}
where   $S$  is   the  swap   operator  ($S   \ket{a}\ket{b}  :=
\ket{b}\ket{a}$)  and  we  defined $  T  :=  \sum_nV^\dagger_{n}
\otimes  \ket{n}\bra{n}   $  where  $V_n$  is   any  unitary  in
$\setname{P}_n$.     Indeed,    \emph{after}    the    use    of
\unexpanded{$U$}   in   the   circuit,   the   classical   index
\unexpanded{$n$}   (encoded  in   the  lower   output  wire   of
\unexpanded{$D$}) is available.

As proved  in the Methods  section, as a trivial  consequence of
Prop. \ref{prop:control} we have
\begin{cor}
  \label{cor:controlgate}
  Let $\setname{S}$  be a  set of gates  and $\setname{P}  := \{
  \setname{P}_n  \}$   be  the   unique  minimal   partition  of
  $\setname{S}$       such      that       $\setname{S}$      is
  $\setname{P}$-markable.  Then $\setname{S}$ is controllable if
  and  only  if,  for  any  choice  of  the  representative  set
  $\setname{R}$, there  exists a  vector $\ket{\psi}$  such that,
  for any $n$ and any $U, V \in \setname{P}_n$, we have
  \begin{align*}
    V^\dagger U \ket{\psi} \propto \ket{\psi}.
  \end{align*}
\end{cor}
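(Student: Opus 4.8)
The plan is to reduce the statement to Proposition~\ref{prop:control} by combining the definition of a controllable set of gates with the representative-independence of the minimal markable partition. First I would record the observation, granted by the remark following Definition~\ref{def:markableset} together with Lemma~\ref{lmm:minpart}, that markability---and hence the unique minimal markable partition $\setname{P}$---is a property of the gate set $\setname{S}$ alone, so that every representative set $\setname{R}$ of $\setname{S}$ shares the same $\setname{P}$. I would also note at the outset that the condition $V^\dagger U \ket{\psi} \propto \ket{\psi}$ is insensitive to the choice of representatives: rephasing $U \mapsto e^{i\alpha}U$ and $V \mapsto e^{i\beta}V$ multiplies $V^\dagger U$ by a global phase and leaves the proportionality untouched. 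Consequently ``for any choice of $\setname{R}$'' and ``for some choice of $\setname{R}$'' are interchangeable here, which is what makes the two directions symmetric.

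For the ``only if'' direction, suppose $\setname{S}$ is controllable. By the definition of controllability for a set of gates there is a representative set $\setname{R}$ with respect to which $\setname{S}$ is controllable, i.e.\ $\setname{R}$ itself is a controllable set of unitaries. Proposition~\ref{prop:control} then furnishes a vector $\ket{\psi}$ with $V^\dagger U \ket{\psi} = \ket{\psi}$ for all $U,V$ in a common part of $\setname{P}$. Since equality trivially implies proportionality, and proportionality is phase-invariant as just noted, the same $\ket{\psi}$ witnesses $V^\dagger U \ket{\psi} \propto \ket{\psi}$ for every representative set, establishing the claimed condition.

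For the ``if'' direction, assume the proportionality condition holds for some (equivalently, any) representative set, witnessed by $\ket{\psi}$. Fixing a part $\setname{P}_n$, the relation $V^\dagger U \ket{\psi} \propto \ket{\psi}$ says that $U\ket{\psi} \propto V\ket{\psi}$ for all $U,V \in \setname{P}_n$, so the unit vectors $\{U\ket{\psi}\}_{U\in\setname{P}_n}$ are pairwise proportional and hence all equal up to a phase. The key construction is then to upgrade proportionality to equality by rephasing: choosing a reference representative $U_0$ in each part and writing $U\ket{\psi} = \mu_U\,U_0\ket{\psi}$ with $|\mu_U|=1$, I would replace each $U$ by $U' := \bar{\mu}_U U$, so that $U'\ket{\psi} = U_0\ket{\psi}$ for every gate in the part. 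With this choice $V'^\dagger U' \ket{\psi} = V'^\dagger U_0 \ket{\psi} = V'^\dagger V' \ket{\psi} = \ket{\psi}$, i.e.\ the rephased representative set $\setname{R}$ satisfies Eq.~\eqref{eq:control} exactly. Since $\setname{R}$ has the same minimal markable partition $\setname{P}$, Proposition~\ref{prop:control} shows $\setname{R}$ is controllable, and therefore $\setname{S}$ is controllable with representatives $\setname{R}$.

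The main obstacle I anticipate is the ``if'' direction, specifically verifying that the per-part rephasing can be performed consistently. One has to confirm that aligning every representative in a part to a single reference vector $U_0\ket{\psi}$ makes all pairwise eigenvalue conditions hold simultaneously---this is exactly where the pairwise proportionality of the $U\ket{\psi}$ is used, collapsing a family of conditions into the single statement $U'\ket{\psi}=U_0\ket{\psi}$. The remaining care is to ensure that the rephasing does not alter the minimal markable partition, which is guaranteed by its representative-independence; this is what licenses the final appeal to Proposition~\ref{prop:control}.
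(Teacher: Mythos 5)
Your proposal is correct and follows essentially the same route as the paper: observe that the proportionality condition is representative-independent, and reduce to Proposition~\ref{prop:control} by rephasing each representative within a part so that proportionality becomes equality. The paper's own proof merely asserts these two steps as ``straightforward''; your write-up supplies the explicit rephasing $U\mapsto\bar{\mu}_U U$ that the paper leaves implicit.
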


The necessary  and sufficient  condition for  controllability of
Proposition~\ref{prop:control} requires knowledge of the minimal
partition    $\setname{P}$    such   that    $\setname{R}$    is
$\setname{P}$-markable---which is  usually difficult  to obtain.
However,     as    proved     in     the    Methods     section,
Propositions~\ref{prop:qubitmarkableset}  and~\ref{prop:control}
allow   for   the   following   complete   characterization   of
controllable sets of qubit unitaries.

\begin{cor}[Controllable sets of qubit unitaries]
  \label{cor:controlqubit}
  A set $\setname{R}$ of qubit  unitaries is controllable if and
  only if it is non trivially markable or it is commuting.
\end{cor}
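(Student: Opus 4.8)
The plan is to combine Proposition~\ref{prop:qubitmarkableset} with the controllability criterion of Corollary~\ref{cor:controlgate}, carrying out a case analysis on the minimal markable partition $\setname{P}$ of $\setname{R}$. Recall that, by Corollary~\ref{cor:controlgate}, $\setname{R}$ is controllable if and only if there is a single vector $\ket{\psi}$ that is a common eigenvector of every $V^\dagger U$ with $U,V$ in the same block of $\setname{P}$. The strategy is therefore to determine, for qubit unitaries, exactly when such a $\ket{\psi}$ can exist, exploiting the fact that Proposition~\ref{prop:qubitmarkableset} sharply restricts the shape of the admissible non-trivial partitions.

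For sufficiency I would argue as follows. If $\setname{R}$ is non-trivially markable, its minimal partition has at least two blocks, and by Proposition~\ref{prop:qubitmarkableset} only two cases arise: either $\setname{P}$ is the two-block partition into diagonal and off-diagonal unitaries in a suitable basis $\{\ket{0},\ket{1}\}$, or $\setname{R}$ is jointly discriminable and $\setname{P}$ is a finer partition. In the first case a direct computation shows that $V^\dagger U$ is diagonal in $\{\ket{0},\ket{1}\}$ whenever $U,V$ lie in the same block, so $\ket{0}$ is a common eigenvector and $\setname{R}$ is controllable. In the second case each block consists of mutually proportional representatives, hence $V^\dagger U\propto I$ and every $\ket{\psi}$ works. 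If instead $\setname{R}$ is commuting, its elements are simultaneously diagonalizable, so every $V^\dagger U$ is diagonal in the common eigenbasis and any eigenbasis vector $\ket{\psi}$ satisfies the criterion. In all cases $\setname{R}$ is controllable.

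For necessity I would proceed by contraposition: assuming $\setname{R}$ is neither non-trivially markable nor commuting, I show it is not controllable. Since $\setname{R}$ is not non-trivially markable, its minimal markable partition is the trivial one-block partition, so by Corollary~\ref{cor:controlgate} controllability would require a single $\ket{\psi}$ with $V^\dagger U\ket{\psi}\propto\ket{\psi}$ for \emph{all} $U,V\in\setname{R}$. Fixing a reference $U_0\in\setname{R}$, this makes $\ket{\psi}$ an eigenvector of every $U_0^\dagger U$; since a qubit unitary is normal, fixing one eigenvector determines the other as $\ket{\psi^\perp}$, so each $U_0^\dagger U$ is diagonal in $\{\ket{\psi},\ket{\psi^\perp}\}$ and $\spn(\setname{R})\subseteq U_0\spn\{I,\sigma_z\}$ is at most two-dimensional. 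The main obstacle is then to show that this coset structure, together with the absence of any non-trivial markable partition, is incompatible with non-commutativity: I would analyse the two-dimensional span via Proposition~\ref{prop:qubitmarkableset}, arguing that such a set either splits into two proportionality classes, or is jointly discriminable—both making it non-trivially markable—or else collapses to a commuting family, thereby contradicting the hypotheses and completing the proof.
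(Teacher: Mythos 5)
Your overall strategy---a case analysis on the minimal markable partition, combining Proposition~\ref{prop:qubitmarkableset} with the controllability criterion of Proposition~\ref{prop:control} and Corollary~\ref{cor:controlgate}---is the same as the paper's, and your sufficiency argument is correct and in fact more explicit than the paper's, which likewise observes that the elements of the diagonal block commute, that the product of two off-diagonal qubit unitaries is diagonal, and that the jointly discriminable case is trivial.

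The gap is in the necessity direction, and you have located it yourself: the step you call ``the main obstacle'' is the entire content of the claim, and the sketch you give for closing it does not go through as described. Having reduced to $\setname{R}=U_0\setname{D}$ with $\setname{D}$ a set of diagonal unitaries, you propose that such a set ``either splits into two proportionality classes, or is jointly discriminable---both making it non-trivially markable---or else collapses to a commuting family.'' But splitting into two proportionality classes does \emph{not} imply non-trivial markability: markability of a bipartition requires the blocks to be perfectly discriminable with a single use of the black box, i.e.\ $\Tr[V^\dagger U]=0$ across blocks (the necessary condition (i) noted after Definition~\ref{def:markableset}), which is strictly stronger than linear independence of the spans. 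Concretely, take $\setname{R}=\{H,HS\}$ with $H$ the Hadamard and $S=\mathrm{diag}(1,i)$: this has exactly your coset form ($U_0=H$, $\setname{D}=\{I,S\}$); it satisfies the controllability criterion for the trivial partition, since $(HS)^\dagger H\ket{0}=S^\dagger\ket{0}=\ket{0}$ (both unitaries send $\ket{0}$ to $H\ket{0}$); it is not commuting, since $H(HS)=S$ while $(HS)H=HSH$ is not diagonal; and its two proportionality classes are not perfectly discriminable, since $\Tr[S]\neq 0$. So the trichotomy you sketch cannot be completed as stated. (For comparison, the paper's own proof dispatches this case with the bare assertion that, when the minimal partition is trivial, controllability is equivalent to commutativity; the example above shows that this assertion itself deserves scrutiny, and in any event that a complete proof of necessity must explicitly handle sets sharing a common input eigenray---which is exactly where your argument stops.)
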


When considering controllable sets of qubit gate, the circuit in
Eq.~\eqref{eq:controlcircuitgen} simplifies as follows:
\begin{align}
  \label{eq:controlcircuitqubit}
  \begin{aligned}
    \begin{overpic}{lhs09}
      \put (4,3.5) {$0$}
      \put (41,16) {$U$}
    \end{overpic}
  \end{aligned}
  \; = \;
  \begin{aligned}
    \begin{overpic}{rhs02}
      \put (19,50) {$U$}
      \put (21,11) {$n$}
    \end{overpic}
  \end{aligned}.
\end{align}
In this case a single ancillary qubit is sufficient.

As a trivial consequence of Corollary~\ref{cor:controlqubit}, we
have  that   the  set   $\setname{R}$  of   all  gates   is  not
controllable, since e.g. the  set proposed in Ref. \cite{AFCB14}
$\setname{R} :=  \{ W_1, W_2, W_3  \}$, with $W_1 =  (\sigma_x +
\sigma_y)/\sqrt{2}$, $W_2 = (\sigma_y + \sigma_z)/\sqrt{2}$, and
$W_3  = (\sigma_z  +  \sigma_x)/\sqrt{2}$ does  not fulfill  the
hypothesis of Corollary~\ref{cor:controlqubit}.

\section{Conclusions}

In this  work we  explored under  what conditions  can a  set of
quantum  conditional statements  be implemented.   We derived  a
necessary and sufficient condition for a set of quantum gates to
be controllable  along with  a complete characterization  of the
controllable  sets of  qubit unitaries.   These results  show an
intimate  relation  between  controllability  and  the  task  of
marking the unitaries, i.e. classifying them while applying them
to an unknown input state.  We completely solved the markability
problem  for two-dimensional  unitaries through  the circuit  of
Eq.~\eqref{eq:qubitmarkableset}, which  could be  considered for
experimental   implementation  using   e.g.~the  technology   of
Ref.~\cite{OPWRB03}. The problem  of finding general markability
conditions in higher dimension remains open.

\section{Acknowledgements}

We thank  M.  Sedlak  for many useful  discussions in  the early
stage of this work.  M.   D.  is supported by Singapore Ministry
of   Education   Academic   Research    Fund   Tier   3   (Grant
No. MOE2012-T3-1-009).

\appendix

\section{Quantum circuit boards}

Let  us   consider  four   finite  dimensional   Hilbert  spaces
$\hilb{H}_j$,  $j=0,  \dots  3$  with dimensions  $d_i  :=  \dim
(\hilb{H}_i)$.   Any  map  that   transforms  an  input  channel
$\mathcal{T}_{in}:\mathcal{B}(\hilb{H}_1)\to
\mathcal{B}(\hilb{H}_2)$     into      an     output     channel
$\mathcal{T}_{out}:\mathcal{B}(\hilb{H}_0)\to
\mathcal{B}(\hilb{H}_3)$ can be realized  by inserting the input
channel into a quantum circuit board as follows:
\begin{align}
  \label{eq:combgen}
  \begin{aligned}
    \begin{overpic}{lhs10}
      \put (4, 30) {$0$}
      \put (31.2, 30) {$1$}
      \put (62, 30) {$2$}
      \put (91, 30) {$3$}
      \put (43.5, 24) {$T_{in}$}
      \put (46, 10) {$A$}
      \put (15.5, 16) {$X_1$}
      \put (72.5, 16) {$X_2$}
    \end{overpic}
  \end{aligned}
  \; = \;
  \begin{aligned}
    \begin{overpic}{rhs10}
      \put (10.5,22) {$0$}
      \put (80,22) {$3$}
      \put (33,12) {$T_{out}$}
    \end{overpic}
  \end{aligned}\;,
\end{align}  
where   $\hilb{H}_A$  is   an   ancillary   Hilbert  space   and
$\mathcal{X}_1                       :\mathcal{B}(\hilb{H}_0)\to
\mathcal{B}(\hilb{H}_1 \otimes \hilb{H}_A  )$ and $\mathcal{X}_2
:\mathcal{B}(\hilb{H}_2     \otimes     \hilb{H}_A     )     \to
\mathcal{B}(\hilb{H}_3)$  are  quantum channels.   This  result,
along with many  other properties of quantum  circuit boards, is
well  known  and  is  the  subject  of  many  publications  (see
e.g. \cite{CDP08, CDP09, BDPC11}).

In particular it is known that  any quantum circuit board, as in
Eq.~\eqref{eq:combgen},  corresponds  to   a  positive  operator
(called  \emph{quantum  comb})   $R  \in  \mathcal{B}(\hilb{H}_0
\otimes  \hilb{H}_1  \otimes  \hilb{H}_2 \otimes  \hilb{H}_3)  $
subject to linear constraints.  Obviously, as far as the Hilbert
spaces $\hilb{H}_i$ are  finite dimensional, the set  of all the
admissible quantum circuit board is a compact set.

Also,  the quantum  channels which  realize the  quantum circuit
board  can be  dilated to  unitary channels  acting on  a larger
Hilbert space $\hilb{H}_B$ whose  dimension $d_B$ satisfies $d_B
\leq d_0 d_1 d_2 d_3$.  Then,  for each quantum circuit board as
in  Eq.~\eqref{eq:combgen}, there  exist  two unitary  operators
$A_1,A_2 \in \mathcal{B}(\hilb{H}_B)$,  $\hilb{H}_B = \hilb{H}_0
\otimes \hilb{H}_1 \otimes \hilb{H}_2  \otimes \hilb{H}_3 $ such
that
\begin{align}
  \begin{aligned}
    \begin{overpic}{lhs11}
      \put (6,5) {$0$}
      \put (23,17) {$A_1$}
      \put (44.5,25) {$T_{in}$}
      \put (68,17) {$B_1$}
      \put (88,5) {$I$}
    \end{overpic}
  \end{aligned}
  \; = \;
  \begin{aligned}
    \begin{overpic}{rhs11}
      \put (26,20) {$T_{out}$}
    \end{overpic}
  \end{aligned}\;,
\end{align}  
where  $\ket{0}   \in  \hilb{H}_1  \otimes   \hilb{H}_2  \otimes
\hilb{H}_3 $  and $I$ denotes  the trace on  $\hilb{H}_0 \otimes
\hilb{H}_1 \otimes \hilb{H}_2$.

\section{Information without disturbance}

Let $ \{ U_i\}$ be a  (possibly infinite) set of $SU(d)$ unitary
operators.   Each  of  them  corresponds to  a  unitary  channel
$\mathcal{U}_i:                       \mathcal{B}(\hilb{H}_1)\to
\mathcal{B}(\hilb{H}_2)$,   $\mathcal{U}_i(\rho)   =  U_i   \rho
U^\dag_i$,  $d_1=d_2 =d$.   Let  us consider  a quantum  circuit
board such that
\begin{align}
  \label{eq:markcomb}
  \begin{aligned}
    \begin{overpic}{lhs12}
      \put (12,20) {$Y_1$}
      \put (67,20) {$Y_2$}
      \put (40,33) {$U_i$}
      \put (86,13) {$E$}
    \end{overpic}
  \end{aligned}
  \; = \;
  \begin{aligned}
    \begin{overpic}{rhs12}
      \put (24.5,74) {$U_i$}
      \put (29,16) {$\psi_i$}
      \put (70,27) {$E$}
    \end{overpic}
  \end{aligned}\;,
\end{align}  
for all  $i$ and for  some set  of pure states  $\{ \ket{\psi_i}
\}$.  Whichever the  set $\{ \ket{\psi_i} \}$  is, the linearity
of  the circuit  in Eq.~\eqref{eq:markcomb}  implies that  $\dim
(\spn \{ \ket{\psi_i}\})  \leq \dim (\spn \{ U_i  \}) \leq d^2$.
Then  we can  consider the  quantum circuit  board in  which the
Hilbert  space  $\hilb{H}_E$,  which   carries  the  states  $\{
\ket{\psi_i} \}$,  is encoded into an  $d^2$-dimensional Hilbert
space.  The resulting circuit board correspond to a quantum comb
$R \in \mathcal{B}(\hilb{H}^{\otimes 6})$.

Let us denote with $\setname{M}$  the set of the quantum circuit
boards which obey Eq.  \eqref{eq:markcomb}  for some set of pure
states. We have the following result.

\begin{lmm}
  \label{lem:compact}
  The set $\setname{M}$ is compact.
\end{lmm}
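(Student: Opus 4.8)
The plan is to show that $\setname{M}$ is a closed subset of the compact set of all admissible quantum combs on the fixed Hilbert space $\hilb{H}^{\otimes 6}$. Since we have already fixed the dimension of $\hilb{H}_E$ to be $d^2$, the ambient space of combs $R \in \mathcal{B}(\hilb{H}^{\otimes 6})$ is compact by the remark in the Appendix (a quantum comb is a positive operator subject to a finite set of linear trace constraints, living in a finite-dimensional operator space, hence the set of all such combs is closed and bounded). Therefore it suffices to prove that $\setname{M}$ is closed, i.e.\ that the limit of any convergent sequence of combs satisfying Eq.~\eqref{eq:markcomb} again satisfies Eq.~\eqref{eq:markcomb} for \emph{some} family of pure states $\{\ket{\psi_i}\}$.

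First I would make the defining condition for membership in $\setname{M}$ explicit at the level of the comb operator. The constraint in Eq.~\eqref{eq:markcomb} says that, for every $i$, inserting the channel $\mathcal{U}_i$ into the board produces the output $\mathcal{U}_i \otimes \ket{\psi_i}\bra{\psi_i}$ (up to the discarded system $E$ as drawn). Contracting the comb $R$ with the Choi operator of $\mathcal{U}_i$ yields an operator that must equal the Choi operator of $\mathcal{U}_i$ tensored with a \emph{rank-one} projector $\ket{\psi_i}\bra{\psi_i}$ on $\hilb{H}_E$. The key observation is that this is a condition that can be phrased \emph{without explicit reference to the states}: $R \in \setname{M}$ if and only if, for every $i$, the operator obtained by contracting $R$ against $\mathcal{U}_i$ factorizes as (Choi of $\mathcal{U}_i$)\,$\otimes$\,$\sigma_i$ with $\sigma_i$ a one-dimensional projector. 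Rank-one-ness and the tensor-factorization are each closed conditions.

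Next I would take a sequence $R_k \to R$ with each $R_k \in \setname{M}$, carrying states $\{\ket{\psi_i^{(k)}}\}$. For each \emph{fixed} $i$, the map $R \mapsto (\text{contraction against } \mathcal{U}_i)$ is linear and hence continuous, so the contracted limit is the limit of the contractions; this forces the limiting contracted operator to be a limit of operators of the product form (Choi)\,$\otimes\,\ket{\psi_i^{(k)}}\bra{\psi_i^{(k)}}$. Because the pure states live on the fixed finite-dimensional space $\hilb{H}_E$, the projectors $\ket{\psi_i^{(k)}}\bra{\psi_i^{(k)}}$ lie in a compact set, so along a subsequence they converge to a projector $\ket{\psi_i}\bra{\psi_i}$ (rank one is preserved in the limit precisely because the norm-one eigenvalue is preserved under convergence of projectors). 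This shows the limiting comb $R$ satisfies Eq.~\eqref{eq:markcomb} with the limit states, so $R \in \setname{M}$.

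\textbf{The main obstacle} I anticipate is the bookkeeping when the index set $\{i\}$ is infinite: one cannot extract a single subsequence that works for every $i$ simultaneously by a naive diagonal argument unless one is careful. The clean way around this is to note that we never need joint convergence of all the state sequences at once—closedness only requires that for each fixed $i$ \emph{some} valid limit state exists, and the factorization property of the limiting comb $R$ is forced index-by-index by continuity of contraction. Thus the existence of the limit states $\ket{\psi_i}$ follows from compactness of the projectors on $\hilb{H}_E$ applied separately for each $i$, and no simultaneous subsequence is required. Once this point is handled, closedness—and hence, together with boundedness in the finite-dimensional comb space, compactness of $\setname{M}$—follows.
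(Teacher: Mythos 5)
Your proof is correct and follows essentially the same route as the paper's: both arguments exhibit $\setname{M}$ as a closed subset of the compact set of quantum combs on the fixed finite-dimensional space $\hilb{H}^{\otimes 6}$. The only cosmetic difference is that the paper encodes the ``output factorizes as Choi of $\mathcal{U}_i$ tensor a rank-one state'' condition as explicit operator equations (an equality plus an idempotency-type constraint), whereas you verify closedness by a sequential limit argument using compactness of the set of pure-state projectors on $\hilb{H}_E$; the substance is the same.
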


\begin{proof}
  The set $\setname{M}$ corresponds to the set defined as 
  \begin{align}
    \label{eq:compact}
    \begin{cases}
      R \in \mathcal{B}(\hilb{H}^{\otimes 6})
      \textrm{ is a quantum comb},\\
      \Tr_E[\bbra{U_i^*}R\kket{U_i^*}] = \kket{U_i}\bbra{U_i}, \\
      (\Tr_{0,3}[\bbra{U_i^*}R\kket{U_i^*}])^2                 =
      d^2\Tr_{0,3}[\bbra{U_i^*}R\kket{U_i^*}],
    \end{cases}
  \end{align}
  where    the    last     two    equalities    translate    Eq.
  \eqref{eq:markcomb}   in    terms   of   the    operator   $R$
  ($\kket{U_i}\bbra{U_i}$ is  the Choi  operator of  the unitary
  channel      $\mathcal{U}_i$,      with      the      notation
  $\kket{A}:=\sum_{m,n}a_{m,n}\ket{m}\ket{n}$ for an operator $A
  := \sum_{m,n}a_{m,n} \ket{m}\bra{n}$).  Eq.~\eqref{eq:compact}
  defines  a  closed  subset  of  the  compact  set  $\{  R  \in
  \mathcal{B}(\hilb{H}^{\otimes 6}) \mbox{ is a quantum comb}\}$
  and hence it defines a compact set.
\end{proof}

We also have the following result.

\begin{lmm}
  \label{lem:nestedcombs}
  Let  $R   \in  \setname{M}$  be  a   quantum  circuit  obeying
  Eq.   \eqref{eq:markcomb}  and   let   $R^{(2)}$  denote   the
  application of $R$ twice, i.e.
  \begin{align}
    \label{eq:nestedcombs}
    \begin{aligned}
      \begin{overpic}{lhs14}
        \put (8,12) {$Y_1$}
        \put (25,30) {$Y_1$}
        \put (59.5,30) {$Y_2$}
        \put (77,12) {$Y_2$}
        \put (42,22) {$U_i$}
        \put (90,8) {$E$}
        \put (90,43) {$E$}
      \end{overpic}
    \end{aligned}
    \; = \;
    \begin{aligned}
      \begin{overpic}{rhs14}
        \put (17,11) {$\psi_i$}
        \put (17,82) {$\psi_i$}
        \put (15,45) {$U_i$}
        \put (42,17) {$E$}
        \put (42,89) {$E$}
      \end{overpic}
    \end{aligned}\;.
  \end{align}  
  We have that $R^{(2)}\in \setname{M}$.
\end{lmm}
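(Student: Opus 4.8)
The plan is to read off the marking states of $R^{(2)}$ directly from those of $R$ and then verify that $R^{(2)}$ still meets both defining requirements of membership in $\setname{M}$: that it is a quantum comb, and that its marking states are pure. First I would argue that $R^{(2)}$ is a comb. It is obtained by linking the main output wire of one copy of $R$ to the insertion slot of a second copy, i.e.\ the inner $U_i$ slot of the outer comb, exactly as displayed in Eq.~\eqref{eq:nestedcombs}; and the class of quantum combs is closed under this kind of wiring, which is the standard composition property recalled in the appendix on quantum circuit boards and in Refs.~\cite{CDP08,CDP09,BDPC11}. Hence $R^{(2)}$ is automatically a positive operator satisfying the comb normalisation constraints, which is the comb condition in Eq.~\eqref{eq:compact}.

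Next I would compute the action of $R^{(2)}$ on an arbitrary $U_i$. The crucial input is the defining relation Eq.~\eqref{eq:markcomb} for $R$: once $U_i$ is inserted, the inner copy reproduces the gate $\mathcal{U}_i$ on its main output wire \emph{without disturbance} and emits the pure marking state $\ket{\psi_i}$ together with its environment $E$. Because the reproduced transformation on that wire is exactly $\mathcal{U}_i$, it can itself be fed into the insertion slot of the outer copy of $R$, to which Eq.~\eqref{eq:markcomb} applies verbatim: the outer copy again reproduces $\mathcal{U}_i$ on the final output and emits a second pure marking state $\ket{\psi_i}$ and a second environment $E$. Substituting the inner relation into the outer one therefore yields Eq.~\eqref{eq:nestedcombs}, with the final wire carrying the undisturbed $\mathcal{U}_i$ and the two marking registers carrying $\ket{\psi_i}\otimes\ket{\psi_i}$.

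It then remains to check the two marking equalities of Eq.~\eqref{eq:compact} for $R^{(2)}$. Discarding both environments and contracting with $\kket{U_i^*}$ returns $\kket{U_i}\bbra{U_i}$, matching the first (reproduction) equality, since the gate is reproduced undisturbed at each of the two stages; and the marking state of $R^{(2)}$ is $\ket{\psi_i^{(2)}}=\ket{\psi_i}\otimes\ket{\psi_i}$, which is pure because it is a tensor product of pure states, so the purity (second) equality holds as well. Both requirements being met, $R^{(2)}\in\setname{M}$.

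I expect the only genuinely delicate point to be the coherence of the nesting in the second paragraph: one must be sure that the inner copy of $R$ hands a \emph{faithful}, undisturbed copy of $\mathcal{U}_i$ to the outer copy rather than a decohered or entangled residue, and that the marking register factorises as $\ket{\psi_i}\otimes\ket{\psi_i}$ instead of becoming correlated with the main wire. This is precisely what the information-without-disturbance structure encoded in Eq.~\eqref{eq:markcomb} guarantees, so once that property is invoked the whole verification reduces to the direct substitution of one circuit identity into the other.
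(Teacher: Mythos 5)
Your verification of the circuit identity and of the two marking conditions is fine as far as it goes, but it skips the one point the lemma is actually about, and which is the entire content of the paper's proof: the \emph{dimension} of the ancillary register. The set $\setname{M}$ is not simply ``all circuit boards satisfying Eq.~\eqref{eq:markcomb} for some pure states''; by construction (see the discussion preceding Lemma~\ref{lem:compact} and the first line of Eq.~\eqref{eq:compact}) its elements are combs on the \emph{fixed} space $\hilb{H}^{\otimes 6}$, with the marking register $\hilb{H}_E$ taken to be $d^2$-dimensional. The concatenated board $R^{(2)}$ as you describe it emits \emph{two} registers, i.e.\ it lives on $\hilb{H}_E\otimes\hilb{H}_E$ and is a comb on a strictly larger space, so checking the conditions of Eq.~\eqref{eq:compact} verbatim does not place it in $\setname{M}$. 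This is not a pedantic point: the lemma is invoked to contradict the optimality of $R^{opt}$ over the compact set $\setname{M}$, and that contradiction only goes through if $R^{(2)}$ belongs to the \emph{same} set over which the optimisation was performed.

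The missing step is the one the paper's proof consists of. The states emitted by $R^{(2)}$ are $\ket{\psi_i}\otimes\ket{\psi_i}$, and by the same linearity argument used to set up $\setname{M}$ in the first place, $\dim(\spn\{\ket{\psi_i}\ket{\psi_i}\})\le\dim(\spn\{U_i\})\le d^2$; hence $\hilb{H}_E\otimes\hilb{H}_E$ can be isometrically re-encoded into a single $d^2$-dimensional register without altering the marking states (up to that isometry) or the undisturbed action on the main wire. After this compression the resulting board does satisfy Eq.~\eqref{eq:compact} on $\hilb{H}^{\otimes 6}$ and therefore belongs to $\setname{M}$. The parts you do spell out---closure of combs under nesting, substitution of the inner identity into the outer one, purity of the product state---are exactly the parts the paper dismisses as immediate, so you should add the dimension-counting and re-encoding step to close the argument.
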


\begin{proof}
  Clearly we  have $\dim  (\spn \{  \ket{\psi_i}\})= \dim(\spn\{
  \ket{\psi_i}\ket{\psi_i}  \})$.   We  can encode  the  Hilbert
  space  $\hilb{H}_E  \otimes  \hilb{H}_E $  which  carries  the
  states    $\{    \ket{\psi_i}\ket{\psi_i}     \}$    into    a
  $d^2$-dimensional Hilbert  space.  Then the  resulting circuit
  board corresponds to a quantum comb in $ \setname{M}$.
\end{proof}

The  quantum  circuit  board  in  Eq.~\eqref{eq:combgen},  while
leaving unaffected the  transformation $\mathcal{U}_i$, extracts
some side  information about $\mathcal{U}_i$ which  is stored in
the state $\ket{\psi_i}$.  Now, for  every pair of unitaries $U$
and $V$,  the amount of  information provided by the  circuit in
Eq.~\eqref{eq:combgen}   can   be   defined  as   an   arbitrary
non-negative       decreasing        function       $f$       of
$\alpha_{U,V}:=|\braket{\psi_U|\psi_V}|$ for any pair $U,V$.

Let now $R^{opt} \in \setname{M}$ be the optimal quantum circuit
board which achieves the maximum value of $f$.  Such a $R^{opt}$
must  exists  since  $\setname{M}$   is  compact  as  proved  in
Lemma~\ref{lem:compact}.  Suppose  that for a given  pair $U,V$,
$R^{opt}$               is               such               that
$0<\alpha_{U,V}:=|\braket{\psi_U|\psi_V}|<1$.    Let    us   now
consider $R^{opt (2)}$ which is the circuit which corresponds to
the   application    of   $R^{opt    }$   twice   as    in   Eq.
\eqref{eq:nestedcombs}.        As       we       proved       in
Lemma~\ref{lem:nestedcombs},  $ R^{opt  (2)}$ is  an element  of
$\setname{M}$.  The quantum circuit  board $ R^{opt (2)}$ gives,
for      any     pair      $U,V$,     $\ket{\psi_U^{(2)}}      =
\ket{\psi_U}\otimes\ket{\psi_U}$                             and
$|\braket{\psi_U^{(2)}|\psi_V^{(2)}}|                          =
\alpha_{U,V}^2<\alpha_{U,V}$.  This implies  that the hypothesis
of optimality of $R^{opt}$ is  absurd. One must then have either
$\alpha_{U,V}=0$ or $\alpha_{U,V}=1$.  Since the argument can be
repeated for every pair $U,V$,  one has that the optimal circuit
board $R^{opt}$  exists and is such  that $\alpha_{U,V}=0,1$ for
all $U,V$. As discussed in the previous section $R^{opt}$ can be
obtained by  a pair of unitary  operators $A^{opt}_1,A^{opt}_2$,
i.e.
\begin{align*}
  \begin{aligned}
    \begin{overpic}{lhs15}
      \put (7,5) {$0$}
      \put (7,24) {$0$}
      \put (20.8,24) {$A_1^{opt}$}
      \put (64,24) {$A_2^{opt}$}
      \put (46,43) {$U_i$}
      \put (89,5) {$I$}
    \end{overpic}
  \end{aligned}
  \; = \;
  \begin{aligned}
    \begin{overpic}{rhs04}
      \put (24,71) {$U_i$}
      \put (26,15) {$\psi_i$}
    \end{overpic}
  \end{aligned}\;,
\end{align*}
where $|\braket{\psi_i|\psi_j}|  = 0,  1$ for  any $i$  and $j$,
which trivially implies also
\begin{align*}
  \begin{aligned}
    \begin{overpic}{lhs16}
      \put (7,5) {$0$}
      \put (7,28) {$0$}
      \put (24,28) {$A_1^{opt}$}
      \put (73,28) {$A_2^{opt}$}
      \put (53,49) {$U_i$}
    \end{overpic}
  \end{aligned}
  \; = \;
  \begin{aligned}
    \begin{overpic}{rhs04}
      \put (24,71) {$U_i$}
      \put (26,15) {$\phi_i$}
    \end{overpic}
  \end{aligned}\;,
\end{align*} 
where $|\braket{\phi_i|\phi_j}| = 0, 1$ for any $i$ and $j$.

We can  now prove  Lemma~\ref{lmm:minpart}, that we  report here
for the reader's convenience.

\begin{lmm}[Uniqueness of the minimal markable partition]
  For any set $\setname{R}$ of  unitaries, there exists a unique
  \emph{minimal} partition $\setname{P}$ such that $\setname{R}$
  is   $\setname{P}$-markable    and   $\setname{R}$    is   not
  $\setname{P}'$-markable for  any refinement  $\setname{P}'$ of
  $\setname{P}$.
\end{lmm}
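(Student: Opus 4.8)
The plan is to show that the family of partitions $\setname{P}$ for which $\setname{R}$ is $\setname{P}$-markable is nonempty, is closed under common refinement, and contains only partitions with a uniformly bounded number of blocks; existence and uniqueness of a \emph{finest} such partition then follow by an elementary argument on the refinement order. Nonemptiness is immediate: for the trivial one-block partition $\setname{P}=\{\setname{R}\}$ the circuit of Eq.~\eqref{eq:markableset} with $C=D=I$ and an untouched ancilla reproduces every $U$ and emits a fixed index state, so $\setname{R}$ is always markable with the coarsest partition.

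The crux is closure under common refinement. Suppose $\setname{R}$ is $\setname{P}$-markable via $(C_1,D_1)$ with orthonormal index states $\ket{n}$, and $\setname{Q}$-markable via $(C_2,D_2)$ with orthonormal index states $\ket{m}$. The key observation---the one that makes the single-use constraint of Eq.~\eqref{eq:markableset} surmountable---is that a marking circuit does not merely extract the index: on its main wire it \emph{reproduces the channel} $\mathcal{U}$ exactly. Hence the output channel of the first marking circuit is again $\mathcal{U}$, and I may feed it into the slot of the second marking circuit without ever using the physical black box $\mathcal{U}$ a second time, exactly as combs are nested in Lemma~\ref{lem:nestedcombs} and Eq.~\eqref{eq:nestedcombs}. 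Composing the two circuits this way yields a single circuit that reproduces $\mathcal{U}$ and emits the product index $\ket{n}\otimes\ket{m}$; since for every $U$ this depends only on the pair $(n,m)$ and the states $\{\ket{n}\otimes\ket{m}\}$ are orthonormal, the composite realizes Eq.~\eqref{eq:markableset} for the partition $\setname{P}\wedge\setname{Q}$ whose blocks are the nonempty intersections $\setname{P}_n\cap\setname{Q}_m$, i.e.\ the common refinement. Dilating the composite comb to unitaries as in the Quantum circuit boards section exhibits the required $C,D$. I expect this step to be the main obstacle, since naively serializing two markings appears to demand two uses of $U$; the resolution is precisely that the first marking returns a usable copy of the channel.

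Finally I would bound the number of blocks. By linearity of the circuit the emitted states $\ket{n}$ span a space of dimension at most $\dim(\spn\{\mathcal{U}\})\le d^2$, and distinct blocks carry mutually orthogonal index states, so every markable partition has at most $d^2$ blocks. Consequently, among all markable partitions there is one, $\setname{P}^\ast$, whose number of blocks is maximal. For any markable $\setname{Q}$ the common refinement $\setname{P}^\ast\wedge\setname{Q}$ is markable by the previous step and refines $\setname{P}^\ast$, so its block count is at least that of $\setname{P}^\ast$; maximality forces equality, and a refinement with the same (finite) number of nonempty blocks coincides with the original, whence $\setname{P}^\ast\wedge\setname{Q}=\setname{P}^\ast$. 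Thus $\setname{P}^\ast$ refines every markable partition. Being itself markable, $\setname{P}^\ast$ admits no proper markable refinement---this is the minimality asserted in the statement---and as the finest markable partition it is unique.
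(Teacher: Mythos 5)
Your proof is correct and takes essentially the same route as the paper's: existence from the trivial partition, and uniqueness from closure under common refinement obtained by nesting the two marking circuits (the first circuit reproduces $\mathcal{U}$ on its main wire, so the second needs no additional physical use of the black box), exactly as in Lemma~\ref{lem:nestedcombs}. Your explicit bound of $d^2$ on the number of blocks, used to guarantee that a finest markable partition exists, is a detail the paper leaves implicit but not a different approach.
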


\begin{proof}
  The existence is proved  by considering the trivial partition.
  To    prove    uniqueness,   let    $\setname{P}^{(0)}$    and
  $\setname{P}^{(1)}$ be  two different minimal  partitions.  By
  subsequently applying  the circuit  in Eq.~\eqref{eq:markable}
  for  $\setname{P}^{(0)}$ and  $\setname{P}^{(1)}$, one  proves
  that    $\setname{R}$    is   $\setname{P}'$-markable,    with
  $\setname{P}'$  the  refinement   of  $\setname{P}^{(0)}$  and
  $\setname{P}^{(1)}$.
\end{proof}

We are  now in  a position  to prove our  main result  about the
markability         of        unitaries         (given        in
Proposition~\ref{prop:qubitmarkableset}  and  reported here  for
the reader's  convenience), that is a  complete characterization
of the set of markable unitaries of a qubit.

\begin{prop}[Markable sets of qubit unitaries]
  A     set    $\setname{R}$     of    qubit     unitaries    is
  $\setname{P}$-markable   with  respect   to  the   non-trivial
  bipartition $\setname{P}  := \{\setname{P}_0, \setname{P}_1\}$
  if and  only if $\spn(\setname{P}_0)  \cap \spn(\setname{P}_1)
  =\{0\}$   and  i)   either   both  $\spn(\setname{P}_0)$   and
  $\spn(\setname{P}_1)$  are  at  most two-dimensional,  or  ii)
  $\setname R$ is jointly discriminable
\end{prop}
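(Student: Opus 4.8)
Sufficiency is already settled above: if $\setname{R}$ is jointly discriminable a marking circuit exists for \emph{any} partition, and if both spans are at most two-dimensional the explicit circuit of Eq.~\eqref{eq:qubitmarkableset} does the job. So the plan is to prove necessity. First I would pass to the unitary dilation of the marking comb guaranteed by the circuit-board formalism of the Methods, writing $\setname{P}$-markability as the existence of unitaries $C,D$ and an ancilla $\ket{0}$ with $D(U\otimes I)C(\ket{\psi}\otimes\ket{0})=U\ket{\psi}\otimes\ket{n}$ for every $U\in\setname{P}_n$ and every input $\ket{\psi}$, the output system carrying $U\ket{\psi}$ without disturbance and the pointer carrying the (block-dependent, hence $U$-independent) vector $\ket{n}$. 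Decomposing the input isometry $C(\ket{\psi}\otimes\ket{0})=\sum_a K_a\ket{\psi}\otimes\ket{a}$ and writing $D(\ket{\xi}\otimes\ket{a})=\sum_n E_{n,a}\ket{\xi}\otimes\ket{n}$, I would read off the component identities $\sum_a E_{m,a}UK_a=\delta_{mn}\,U$ for $U\in\setname{P}_n$. Equivalently, the linear maps $\Lambda_m(X):=\sum_a E_{m,a}XK_a$ on the four-dimensional space $\mathcal{M}_2$ of qubit operators satisfy $\Lambda_m|_{\spn(\setname{P}_m)}=\mathrm{id}$ and $\Lambda_m|_{\spn(\setname{P}_{m'})}=0$ for $m'\neq m$.

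From here the intersection condition is immediate: if $X\in\spn(\setname{P}_0)\cap\spn(\setname{P}_1)$ then $X=\Lambda_0(X)=0$, since $\Lambda_0$ is the identity on $\spn(\setname{P}_0)$ but annihilates $\spn(\setname{P}_1)$; hence $\spn(\setname{P}_0)\cap\spn(\setname{P}_1)=\{0\}$. Because the two spans meet only in $\{0\}$ inside the four-dimensional $\mathcal{M}_2$, this forces $\dim\spn(\setname{P}_0)+\dim\spn(\setname{P}_1)\le 4$. If both summands are at most two we are in case (i) and there is nothing left to prove. Otherwise, since neither block is empty and a four-dimensional span would swallow the other block, the only remaining possibility is that one span---say $\spn(\setname{P}_0)$---is three-dimensional and the other is one-dimensional, so that $\setname{P}_1$ consists of a single gate $V_0$, the ranges of $\Lambda_0,\Lambda_1$ now span all of $\mathcal{M}_2$, and $\Lambda_1$ is a rank-one projection onto $\spn(V_0)$ along $W_0:=\spn(\setname{P}_0)$. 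I must then show that $\setname{R}$ is jointly discriminable, i.e.\ (in dimension two) that its elements are pairwise Hilbert--Schmidt orthogonal, $\Tr(U^\dagger V)=0$.

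This last step is the crux and is where the real work lies. The available leverage is the unitarity of $C$ and $D$, which translates into $\sum_a K_a^\dagger K_a=I$, $\sum_n E_{n,a}^\dagger E_{n,b}=\delta_{ab}I$ and $\sum_a E_{n,a}E_{m,a}^\dagger=\delta_{nm}I$, together with the fact that $\Lambda_0$ is forced to be the specific operator-sum idempotent projecting onto a three-dimensional subspace of $\mathcal{M}_2$ while $\Lambda_1$ is its rank-one complement. The plan is to exploit that a rank-one idempotent of the form $\sum_a E_{1,a}XK_a$ built from Kraus families obeying the above isometry relations is extremely rigid in dimension two: I expect it to force $V_0$ to be Hilbert--Schmidt orthogonal to all of $W_0$ and, more delicately, to force any two unitaries lying in the three-dimensional $W_0$ to be mutually orthogonal---ruling out configurations such as $\{I,\sigma_x,(\sigma_x+\sigma_z)/\sqrt{2}\}$ whose span is three-dimensional yet which contain non-orthogonal unitaries. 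Once pairwise orthogonality is in hand, the $d=2$ criterion supplies a maximally entangled discriminating input (reference state $I/2$), establishing case (ii). I anticipate this rigidity computation---carried out concretely with a Pauli parametrization of the $K_a$ and $E_{n,a}$---to be the main technical obstacle, and the only place where the restriction to qubits is genuinely used.
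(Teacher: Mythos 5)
Your sufficiency discussion, the reduction to the maps $\Lambda_m$, the trivial-intersection argument, and the dimension count forcing a $3{+}1$ split are all sound and broadly parallel to the paper. The problem is the crux, which you explicitly leave as an expectation rather than a proof---and, more seriously, the plan you sketch for it cannot succeed as stated. Every constraint you have extracted ($\Lambda_m$ acts as the identity on $\spn(\setname{P}_m)$ and as zero on $\spn(\setname{P}_{m'})$, together with the isometry relations on the $K_a$ and $E_{n,a}$) depends only on the \emph{spans} of the two blocks, because the marking condition is linear in $U$ and hence extends automatically from $\setname{P}_n$ to $\spn(\setname{P}_n)$. Joint discriminability, by contrast, is a property of the individual unitaries, not of their span: $\{\sigma_x,\sigma_y,\sigma_z\}$ and $\{\sigma_x,\sigma_y,\sigma_z,(\sigma_x+\sigma_z)/\sqrt2\}$ generate the same three-dimensional subspace, yet only the first is jointly discriminable. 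No amount of rigidity analysis of $\Lambda_1$ can therefore deliver the conclusion ``the elements of the three-dimensional block are pairwise Hilbert--Schmidt orthogonal'': your hypotheses are invariant under replacing a block by any other set of unitaries with the same span, while your desired conclusion is not.

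The paper closes this step with two ingredients you are missing. First, an algebraic one: writing $T_U := D(U\otimes I)C(I\otimes\ket{0})$, unitarity of $D$ gives $T_U^\dagger T_V=(I\otimes\bra{0})C^\dagger(U^\dagger V\otimes I)C(I\otimes\ket{0})$, a fixed \emph{linear} function $F$ of the product $U^\dagger V$ alone, while Eq.~\eqref{eq:markableset} forces $F(U^\dagger V)=U^\dagger V\braket{n|m}$ for $U\in\setname{P}_n$, $V\in\setname{P}_m$. Evaluating $F$ on $\sigma_x\sigma_z=-i\sigma_y$ in two ways---as a product of two elements of the three-dimensional block, and via $I^\dagger\sigma_y$ with $I$ in the one-dimensional block---yields $-i\sigma_y=0$, killing the $1{+}3$ configuration when the bipartition is minimal. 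Second, a logical one: when the bipartition is \emph{not} minimal and the set is \emph{not} jointly discriminable, the unique minimal markable partition (Lemma~\ref{lmm:minpart}) must be of type $1{+}1{+}2$, and the same product trick ($\sigma_y\sigma_x$ versus $I^\dagger\sigma_z$) again gives a contradiction; joint discriminability survives only as the uncovered case. So the route to clause (ii) passes through the minimal partition, not through the given bipartition's own circuit. If you wish to stay inside your Kraus framework, the target you can realistically reach is the span-level impossibility of the $1{+}3$ (and $1{+}1{+}2$) configurations, and for that you must combine your relations $\sum_a E_{m,a}UK_a=\delta_{mn}U$ and $\sum_n E_{n,a}^\dagger E_{n,b}=\delta_{ab}I$ into the sesquilinear identity above; the componentwise identities on their own never see products of unitaries taken from different blocks.
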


\begin{proof}
  First we  prove necessity.  The  case ii) is trivial.   Let us
  consider  a   bipartition  $\setname{P}:=   \{\setname{P}_0  ,
  \setname{P}_1     \}$    such     that    $\setname{R}$     is
  $\setname{P}$-markable, and  let us denote by  $U_n^{(i)}$ the
  $n$-th element  of $\setname{P}_i$.  The result  relies on the
  fact   that   two   qubit  unitaries   $U,V$   are   perfectly
  discriminable iff $U^\dag V$ is trace-less \cite{DLP02}, which
  by              Eq.~\eqref{eq:markableset}              implies
  $\Tr[U^{(i)\dag}_mU^{(j)}_n]=0$ for  $i\neq j$.   Without loss
  of generality we suppose  $I \in \setname{P}_0$ (otherwise, we
  can consider the set $\setname{R}' := \{ U^{(0)\dag}_j U_n \}$
  which  is  $\setname{P}$-markable  if  and  only  if  the  set
  $\setname{R}$  is).    We  show  that  $\setname{R}$   is  not
  $\setname{P}$-markable  for  any   partition  $\setname{P}  :=
  \{\setname{P}_0,       \setname{P}_1\}$        such       that
  $\dim(\spn(\setname{P}_1))   =  3$.    Since   we  must   have
  $\dim(\spn(\setname{P}_0))  =   1$,  we   have  $\spn(\setname
  P_0)=\spn(I)$                and                $\spn(\setname
  P_0)=\spn(\sigma_x,\sigma_y,\sigma_z)$,  where  $\spn(\setname
  T)$ denotes the  complex span of $\setname  T$, and $\sigma_i$
  are the Pauli  matrices.  First we consider the  case in which
  $\setname{P}$   is   the   minimal   partition.    Then   from
  Eq.~\eqref{eq:markableset}, defining  $T_i:=D (\sigma_i \otimes
  I)  C (I  \otimes  \ket{0})$,  we must  have  $T_0= I  \otimes
  \ket{0}$, and $T_i = \sigma_i\otimes \ket{1} $, from which one
  derives    the     contradiction    $\sigma_y=iT^\dag_x    T_z
  =T^\dag_0T_y= 0$.   Let us  now suppose that  $\setname{P}$ is
  not the minimal partition and all the unitaries in the set are
  not  jointly  perfectly   discriminable.   Then,  the  minimal
  partition     $\setname{P}'$     must     be     such     that
  $\setname{P}':=\{\setname{P}_0, \setname{P}_1,\setname{P}_2\}$
  with  $\dim(\spn(\setname{P}_0)) =\dim(\spn(\setname{P}_1))  =
  1$  and  $\dim(\spn(\setname{P}_2))  = 2$.   Without  loss  of
  generality  we can  suppose  $\spn(\setname{P}_0) =  \spn(I)$,
  $\spn(\setname{P}_1)        =       \spn(\sigma_z)$        and
  $\spn(\setname{P}_2)) = \spn(\sigma_y , \sigma_x)$.  Then from
  Eq.~\eqref{eq:markableset},  we  must   have  $T_0=  I  \otimes
  \ket{0} $, $T_z= \sigma_z \otimes \ket{1} $ and $T_i= \sigma_i
  \otimes  \ket{2} $  for $i  = x,y$  from which  we obtain  the
  contradiction $\sigma_z=iT^\dag_y T_x = T^\dag_0 T_z=0$.
  
  To prove sufficiency, suppose  without loss of generality that
  $\spn(\setname{P}_0)  \subseteq   \spn(\{I,  \sigma_z\})$  and
  $\spn(\setname{P}_1) \subseteq  \spn(\{\sigma_x, \sigma_y\})$.
  By the circuit
  \begin{align*}
    \begin{aligned}
      \begin{overpic}{lhs06}
        \put (8,6) {$0$}
        \put (56,31) {$U$}
      \end{overpic}
    \end{aligned}
    \; = \;
    \begin{aligned}
      \begin{overpic}{rhs04}
        \put (28,67) {$U$}
        \put (30,14) {$n$}
      \end{overpic}
    \end{aligned}\;,
  \end{align*}
  one   can   then   easily    check   that   $\setname{R}$   is
  $\setname{P}$-markable.
\end{proof}

\section{Controllability}

We     are      now     in      a     position      to     prove
Proposition~\ref{prop:control},  that  we  report here  for  the
reader's convenience.

\begin{prop}[Necessary    and     sufficient    condition    for
  controllability]
  Let  $\setname{R}$   be  a   set  of  unitary   operators  and
  $\setname{P} := \{ \setname{P}_n  \}$ be the minimal partition
  of     $\setname{R}$     such    that     $\setname{R}$     is
  $\setname{P}$-markable.  Then $\setname{R}$ is controllable if
  and only if there exists  a vector $\ket{\psi}$ such that, for
  any $n$ and any $U, V \in \setname{P}_n$, we have
  \begin{align}
    \label{eq:controlprop}
    V^\dagger U \ket{\psi} = \ket{\psi}.
  \end{align}  
\end{prop}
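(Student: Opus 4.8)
The sufficiency direction is already settled by the explicit construction in Eq.~\eqref{eq:controlcircuitgen}, so the plan is to prove only necessity; as a guide I will keep in mind that the hypothesis $V^\dagger U\ket{\psi}=\ket{\psi}$ makes $U\ket{\psi}=V_n\ket{\psi}$ the \emph{same} vector for every $U$ in a block, which is exactly what lets the correction $T=\sum_n V_n^\dagger\otimes\ket{n}\bra{n}$ undo the action on the ancilla once the label $n$ is read out.

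First I would put controllability in isometric form. Feeding the ready ancilla $\ket{0}$ into $A$, Eq.~\eqref{eq:controllable} reads $B(U\otimes I)J=(U\oplus I)\otimes\ket{\psi_U}$, where $J$ is the fixed isometry $\ket{b}\otimes\ket{\phi}\mapsto A(\ket{b}\otimes\ket{\phi}\otimes\ket{0})=:\ket{\Phi^b_\phi}$ ($b=0,1$), with $B$ fixed while $\ket{\psi_U}$ depends on $U$. Using Lemmas~\ref{lem:compact} and~\ref{lem:nestedcombs} together with the information--disturbance argument, I would pass to a circuit with $|\braket{\psi_U|\psi_V}|\in\{0,1\}$; the proportionality classes of $\{\ket{\psi_U}\}$ then form a markable partition, which by Lemma~\ref{lmm:minpart} is coarser than the minimal partition $\setname{P}$. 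Hence for any $U,V\in\setname{P}_n$ one has $\ket{\psi_U}=c_{UV}\ket{\psi_V}$ for a phase $c_{UV}$, and the whole problem reduces to the study of these phases.

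The core of the argument extracts two relations. With $W_U:=B(U\otimes I)$, so that $W_V^\dagger W_U=V^\dagger U\otimes I$, projecting the controllability identity onto the control values $\ket{0}$ and $\ket{1}$ gives $W_U\ket{\Phi^0_\phi}=\ket{0}\ket{\phi}\ket{\psi_U}$ and $W_U\ket{\Phi^1_\phi}=\ket{1}\,U\ket{\phi}\,\ket{\psi_U}$; inserting $\ket{\psi_U}=c_{UV}\ket{\psi_V}$ yields
\[
(V^\dagger U\otimes I)\ket{\Phi^0_\phi}=c_{UV}\ket{\Phi^0_\phi},\qquad
(V^\dagger U\otimes I)\ket{\Phi^1_\phi}=c_{UV}\ket{\Phi^1_{V^\dagger U\phi}}.
\]
The first shows that the fixed $d$-dimensional subspace $\{\ket{\Phi^0_\phi}\}$ is, for \emph{every} same-block pair, an eigenspace of $V^\dagger U\otimes I$ with eigenvalue $c_{UV}$, so its Schmidt vectors span a pair-independent subspace $\mathcal{K}$ on which $V^\dagger U=c_{UV}I$. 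Both relations together say that $V^\dagger U\otimes I$ preserves the range of $J$, and compressing to it gives the single clean identity $J^\dagger(V^\dagger U\otimes I)J=c_{UV}\big[(V^\dagger U)\oplus I\big]$.

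Comparing spectra in this identity is where the phases get pinned: its right-hand side has spectrum $\{c_{UV}\}\cup c_{UV}\,\mathrm{spec}(V^\dagger U)$, while its left-hand side, being an isometric compression of $V^\dagger U\otimes I$ onto an invariant subspace, has spectrum contained in $\mathrm{spec}(V^\dagger U)$. Thus multiplication by $c_{UV}$ maps the finite set $\mathrm{spec}(V^\dagger U)$ bijectively onto itself and $c_{UV}$ lies in it; such a multiplication has finite order, so $c_{UV}$ is a root of unity and, iterating, $1\in\mathrm{spec}(V^\dagger U)$ for every same-block pair—each $V^\dagger U$ already has a fixed vector. The main obstacle, and the technical heart I expect to be deferred to the Methods, is to promote these pairwise fixed vectors to a \emph{single} $\ket{\psi}$ fixed by $V^\dagger U$ for all $n$ and all $U,V\in\setname{P}_n$ at once, since the control-$\ket{0}$ vectors carry the eigenvalue $c_{UV}\neq1$ in general and $\mathcal{K}$ does not suffice. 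To produce the common vector I would exploit the control-$\ket{1}$ relation in channel form, $\sum_\alpha M_\alpha^\dagger(V^\dagger U)M_\alpha=c_{UV}\,V^\dagger U$ with $\ket{\Phi^1_\phi}=\sum_\alpha(M_\alpha\ket{\phi})\otimes\ket{\alpha}$, which exhibits $V^\dagger U$ as a peripheral eigenoperator of the unital channel $X\mapsto\sum_\alpha M_\alpha^\dagger X M_\alpha$; the peripheral eigenoperators of a unital channel form a rigid $*$-closed structure common to all the $V^\dagger U$, and I would extract a vector in the joint $1$-eigenspace from it, using the compactness of $\setname{M}$ (Lemma~\ref{lem:compact}) to select a circuit realising it coherently across all pairs and blocks. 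Closing this intersection step is where I expect the real work to lie.
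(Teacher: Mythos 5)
Your necessity argument has a genuine gap, and you have correctly located it yourself: everything you derive is pairwise (each $V^\dagger U$ with $U,V$ in the same block has $1$ in its spectrum, or a $c_{UV}$-eigenspace $\mathcal{K}$), whereas the proposition asserts a \emph{single} $\ket{\psi}$ fixed by $V^\dagger U$ for all pairs in all blocks simultaneously --- and that common vector is the entire content of the statement. The route you sketch for closing it (peripheral eigenoperators of the unital channel $X\mapsto\sum_\alpha M_\alpha^\dagger XM_\alpha$, plus a compactness selection) is not carried out, and your spectral argument cannot by itself force $c_{UV}=1$: the constraint that multiplication by $c_{UV}$ permutes $\mathrm{spec}(V^\dagger U)$ and that $c_{UV}\in\mathrm{spec}(V^\dagger U)$ is consistent with $c_{UV}\neq 1$ (e.g.\ spectrum $\{1,-1\}$ and $c_{UV}=-1$), so the phase problem you flag is real within your framework and the proof is not complete.

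The paper's necessity proof is far shorter and sidesteps the phase entirely. It first normalizes the ancilla outputs to $\ket{\phi_n}=\ket{n}$ \emph{exactly} (no $U$-dependent phase within a block), invoking the same optimality/markability argument that underlies Definition~\ref{def:markableset}; this is precisely the point where your $c_{UV}$ becomes $1$. It then evaluates the controllability identity on a single vector $\ket{\chi}$ in the control-$\ket{0}$ sector, which satisfies $(I\oplus U)\ket{\chi}=\ket{\chi}$ for \emph{every} $U$ at once, so that $\ket{\Psi}:=A(\ket{0}\otimes\ket{\chi})$ obeys $(I\otimes U)\ket{\Psi}=B^\dagger(\ket{n}\otimes\ket{\chi})$, a vector depending on $U$ only through the block label $n$. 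Hence $(I\otimes V^\dagger U)\ket{\Psi}=\ket{\Psi}$ for every same-block pair in every block, with one $U$-independent $\ket{\Psi}$, and $\ket{\psi}:=(\bra{a'}\otimes I)\ket{\Psi}$ for any $a'$ with nonzero overlap is the desired common fixed vector; minimality of $\setname{P}$ (a refinement of the label partition) finishes the argument. In your notation this is exactly your first relation $(V^\dagger U\otimes I)\ket{\Phi^0_\phi}=c_{UV}\ket{\Phi^0_\phi}$ with $c_{UV}=1$, after which your subspace $\mathcal{K}$ already suffices --- so the missing ingredient is not a new spectral or channel-theoretic structure but the normalization of the marker states, which should be justified once (via the markability argument) rather than fought pair by pair.
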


\begin{proof}
  First, let us assume that $\setname{R}$ is controllable.  Then
  there exist unitaries $A$ and $B$ such that
  \begin{align}
    \label{eq:controlcondition}
    B  ( I\otimes  U )  A (\ket{0}  \otimes I  ) =  \ket{\phi_n}
    \otimes (I \oplus U)
  \end{align}
  holds.  By the optimality  argument used to justify definition
  \ref{def:markableset},   it  is   not   restrictive  to   take
  $\ket{\phi_n}   =   \ket{n}$   for  some   orthonormal   basis
  $\{\ket{n}\}$. Let us consider a vector $\ket{\chi}$ such that
  $(I      \oplus      U)     \ket{\chi}=\ket{\chi}$.       From
  Eq.~\eqref{eq:controlcondition}  we  have   $  (I  \otimes  U)
  \ket{\Psi}=  \ket{\Phi_n}$ where  we  defined $\ket{\Psi}:=  A
  (\ket{0} \ket{\chi} )$  and $\ket{\Phi_n}:= B^\dagger (\ket{n}
  \ket{\chi})$.  Since the previous identity holds for any $U\in
  \setname{R}$   for  some   $n$,  there   exists  a   partition
  $\setname{P}'  $ of  $\setname R$  such then  $\setname{P}'_n$
  collects    all     the    $U$    that     satisfy    equation
  \eqref{eq:controlcondition}     with     the     same     $n$.
  Equation~\eqref{eq:controlcondition} can then  be rewritten as
  $ (I \otimes V^\dagger U)  \ket{\Psi}= \ket{\Psi}$ for any $U,
  V  \in  \setname{P}_n$.   Let  us now  consider  an  expansion
  $\ket{\Psi}:=  \sum_{a,b} c_{a,b}\ket{a}\ket{b}$  ($\{ \ket{a}
  \}$ and  $\{\ket{b}\}$ are orthonormal  basis) and let  us fix
  $a'$ such that $c_{a',b}\neq 0$  for some $b$.  By multiplying
  both  sides   of  $   (I  \otimes  V^\dagger   U)  \ket{\Psi}=
  \ket{\Psi}$    by     $\bra{a'}\otimes    I$     we    recover
  Eq.~\eqref{eq:controlprop}  with $\ket{\psi}:=(\bra{a'}\otimes
  I)\ket\Psi$.  We then proved  that controllability implies the
  existence   of   a    partition   $\setname{P}'$   such   that
  Eq.~\eqref{eq:controlprop}  holds  in  each  $\setname{P}'_n$.
  The   same  condition   holds   for   the  minimal   partition
  $\setname{P}$ since it is  a refinement of $\setname{P}'$.  We
  then proved necessity.

  The proof  of sufficiency is as  follows.  Let $C$ and  $D$ be
  the unitaries that  realize the circuit \eqref{eq:markableset}
  for the  minimal partition $\setname{P}:=  \setname{P}_n$ such
  that $\setname{R}$  is $\setname{P}$-markable.  Then,  one can
  verify   that  the   following   circuit   controls  the   set
  $\setname{R}$:
  \begin{align*}
    \begin{aligned}
      \begin{overpic}{lhs08}
        \put (4,16) {$\psi$}
        \put (4,3) {$0$}
        \put (15.5,21) {$S$}
        \put (27.5,9) {$C$}
        \put (40,15) {$U$}
        \put (52,9) {$D$}
        \put (64.5,21) {$S$}
        \put (83,9) {$T$}
      \end{overpic}
    \end{aligned}
    \; = \;
    \begin{aligned}
      \begin{overpic}{rhs08}
        \put (15,8) {$n$}
        \put (15,38) {$\psi$}
        \put (13.5,65) {$U$}
      \end{overpic}
    \end{aligned}\;,
  \end{align*}
  where  $S$   is  the   swap  operator  ($S   \ket{a}\ket{b}  =
  \ket{b}\ket{a}$)  and we  defined $  T :=  \sum_nV^\dagger_{n}
  \otimes  \ket{n}\bra{n}  $  where  $V_n$  is  any  unitary  in
  $\setname{P}_n$.  Indeed,  \emph{after} the use of  $U$ in the
  circuit, the classical index $n$  (encoded in the lower output
  wire of $D$) is available.
\end{proof}

\begin{cor}[Necessary    and     sufficient    condition    for
  controllability]
  Let $\setname{S}$  be a  set of gates  and $\setname{P}  := \{
  \setname{P}_n  \}$   be  the   unique  minimal   partition  of
  $\setname{S}$       such      that       $\setname{S}$      is
  $\setname{P}$-markable.  Then $\setname{S}$ is controllable if
  and  only  if,  for  any  choice  of  the  representative  set
  $\setname{R}$ there  exists a  vector $\ket{\psi}$  such that,
  for any $n$ and any $U, V \in \setname{P}_n$, we have
  \begin{align}
    \label{eq:control2}
    V^\dagger U \ket{\psi} \propto \ket{\psi}.
  \end{align}  
\end{cor}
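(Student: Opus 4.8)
The plan is to reduce the corollary to Proposition~\ref{prop:control} by exploiting the phase freedom in the choice of representatives. First I would record two structural facts. (a) By the remark following Definition~\ref{def:markableset}, markability of a set of gates does not depend on the chosen representatives, so by Lemma~\ref{lmm:minpart} the minimal markable partition $\setname{P}$ is the same for every representative set $\setname{R}$ of $\setname{S}$. (b) Replacing a representative $U$ by $e^{i\theta}U$ sends $V^\dagger U$ to a phase multiple of itself, so the \emph{proportionality} condition $V^\dagger U\ket{\psi}\propto\ket{\psi}$ is invariant under any change of representatives; hence ``for any choice of $\setname{R}$'' and ``for some choice of $\setname{R}$'' are equivalent formulations of the condition.

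For the direction that controllability implies the condition, I would argue as follows. If $\setname{S}$ is controllable, then by definition some representative set $\setname{R}$ is controllable, and Proposition~\ref{prop:control} supplies a vector $\ket{\psi}$ with $V^\dagger U\ket{\psi}=\ket{\psi}$ for all $n$ and all $U,V\in\setname{P}_n$. In particular the proportionality condition holds for this $\setname{R}$, and by fact (b) it then holds for every representative set.

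The converse is the substantive step. Assume the proportionality condition holds for some representative set $\setname{R}$ with vector $\ket{\psi}$. Since $V^\dagger U$ is unitary, within each block $\setname{P}_n$ one has $V^\dagger U\ket{\psi}=\lambda_{U,V}\ket{\psi}$ with $|\lambda_{U,V}|=1$. I would first verify the cocycle identity $\lambda_{U,W}=\lambda_{U,V}\lambda_{V,W}$ (obtained by inserting $V V^\dagger$ and using $\lambda_{V,V}=1$), which shows that within each block the phases are coboundaries: fixing a reference $W_n\in\setname{P}_n$ and setting $\lambda_U:=\lambda_{U,W_n}$ gives $\lambda_{U,V}=\lambda_U\lambda_V^{-1}$. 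I would then rephase block by block, replacing each $U\in\setname{P}_n$ with $\tilde U:=\lambda_U^{-1}U$; a one-line computation gives $\tilde V^\dagger\tilde U\ket{\psi}=\lambda_V\lambda_U^{-1}\lambda_{U,V}\ket{\psi}=\ket{\psi}$ for all $U,V\in\setname{P}_n$. Because the minimal markable partition of the rephased set $\tilde{\setname{R}}$ is still $\setname{P}$ by fact (a), Proposition~\ref{prop:control} applies to $\tilde{\setname{R}}$ and yields controllability; hence $\setname{S}$ is controllable.

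The main obstacle is precisely this rephasing argument: one must check that the block-wise proportionality constants are mutually consistent (the cocycle identity) so that a single global choice of phases can simultaneously convert all proportionalities within a block into equalities, and that the independent rephasings in different blocks do not interfere, which holds because Eq.~\eqref{eq:controlprop} only couples representatives lying in the same block $\setname{P}_n$.
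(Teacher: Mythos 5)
Your proposal is correct and follows the same route as the paper's own (two-sentence) proof: the paper likewise observes that the proportionality condition is representative-independent and that one can rephase the representatives block by block to convert Eq.~\eqref{eq:control2} into the equality of Proposition~\ref{prop:control}. You merely supply the details (the cocycle identity and the consistency of the rephasing) that the paper dismisses as ``straightforward.''
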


\begin{proof}
  The conditions of Corollary~\ref{cor:controlgate} are met by a
  set of  representatives if  and only  if they  are met  by any
  other set  of representatives. Given a  set of representatives
  and a  vector such  that Eq.~\eqref{eq:control2} holds,  it is
  straightforward  to find  a set  of representatives  such that
  Eq.~\eqref{eq:controlprop} holds.
\end{proof}

\begin{cor}[Controllable sets of qubit unitaries]
  A set $\setname{R}$ of qubit  unitaries is controllable if and
  only if it is non-trivially markable or it is commuting.
\end{cor}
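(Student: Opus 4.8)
The plan is to read the statement through Proposition~\ref{prop:control}, which reduces controllability of $\setname{R}$ to the existence of a single vector $\ket{\psi}$ that is left invariant (in the sense of Eq.~\eqref{eq:control}) by $V^\dagger U$ within every block of the minimal markable partition $\setname{P}$, and then to combine this with the explicit description of $\setname{P}$ for qubits provided by Proposition~\ref{prop:qubitmarkableset}. The whole argument is most transparent on the Bloch sphere: the invariance condition says exactly that, inside each block $\setname{P}_n$, all unitaries send the ray $[\psi]$ to one and the same ray $[\phi_n]$.

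For sufficiency I would dispose of the two cases separately. If $\setname{R}$ is commuting, its elements are simultaneously diagonal, so they share an eigenvector $\ket{\psi}$; choosing the representatives so that $U\ket{\psi}=\ket{\psi}$ (which only fixes the physically irrelevant global phase) makes $V^\dagger U\ket{\psi}=\ket{\psi}$ hold for every pair, and Proposition~\ref{prop:control} yields controllability at once. If instead $\setname{R}$ is non-trivially markable, Proposition~\ref{prop:qubitmarkableset} lets me assume, in a suitable basis, that $\spn(\setname{P}_0)\subseteq\spn(\{I,\sigma_z\})$ and $\spn(\setname{P}_1)\subseteq\spn(\{\sigma_x,\sigma_y\})$; taking $\ket{\psi}=\ket{0}$ then sends every diagonal element to a multiple of $\ket{0}$ and every off-diagonal element to a multiple of $\ket{1}$, so inside each block all elements agree on $\ket{0}$ up to a phase and Eq.~\eqref{eq:control} holds block-wise. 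The control is then realised concretely by the qubit circuit of Eq.~\eqref{eq:controlcircuitqubit}, using the unitaries $C,D$ that mark $\setname{R}$ as in Eq.~\eqref{eq:markableset}.

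For necessity I would start from a controllable $\setname{R}$, invoke Proposition~\ref{prop:control} to produce the invariant vector $\ket{\psi}$, and split according to whether the common image ray $[\phi_n]$ equals the source ray $[\psi]$. If $[\phi_n]=[\psi]$ for the relevant block, then $\ket{\psi}$ is a common eigenvector of that block; since two qubit unitaries sharing one eigenvector are automatically diagonal in the same basis, propagating this across all blocks forces a common eigenbasis and hence shows $\setname{R}$ is commuting. If instead $[\phi_n]\neq[\psi]$, I would diagonalise the pair $(\ket{\psi},\ket{\phi_n})$ and argue that the block then consists of unitaries with a prescribed off-diagonal action, so that the trace-orthogonality $\Tr[U^\dagger V]=0$ underlying Proposition~\ref{prop:qubitmarkableset} is met between blocks and the diagonal/off-diagonal splitting makes $\setname{R}$ non-trivially markable.

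The hard part will be this necessity direction, and in particular the rigidity step: one must show that for qubits the purely geometric constraint ``every element carries one fixed ray to one common ray'' is stiff enough to leave \emph{only} the two structures above, so that a controllable $\setname{R}$ which is not commuting is necessarily realigned into the diagonal/off-diagonal form of Proposition~\ref{prop:qubitmarkableset}. I expect the delicate configurations to be the boundary ones, where the source and image rays are neither equal nor orthogonal; handling these cleanly will require the explicit $SU(2)$ parametrisation together with a careful bookkeeping of which representatives are admissible, so as to rule out a controllable set whose minimal markable partition is trivial yet which is neither commuting nor markable.
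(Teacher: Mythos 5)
Your sufficiency half coincides with the paper's argument and is fine: for a commuting set a common eigenvector (with representatives rephased so that the eigenvalue is $1$) satisfies Eq.~\eqref{eq:control}, and for a non-trivially markable set the diagonal/off-diagonal normal form of Proposition~\ref{prop:qubitmarkableset} makes $\ket{\psi}=\ket{0}$ work, since elements of $\setname{P}_0$ fix $\ket{0}$ up to a phase while $V^\dagger U$ is diagonal whenever $U,V\in\setname{P}_1$.

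The necessity half is where the gap lies, and you have partly misplaced the difficulty. If the minimal markable partition of $\setname{R}$ is non-trivial, then $\setname{R}$ is non-trivially markable \emph{by definition}: there is nothing to ``realign into diagonal/off-diagonal form'', and the whole rigidity programme you sketch for that case is unnecessary. The only case that needs an argument is the one with trivial minimal partition, where Proposition~\ref{prop:control} hands you a state $\ket{\psi}$ with $V^\dagger U\ket{\psi}=\ket{\psi}$ for \emph{all} pairs, i.e.\ every element of $\setname{R}$ sends $\ket{\psi}$ to one and the same vector $\ket{\phi}$. The paper disposes of this case in a single unproved sentence (``controllability is equivalent to commutativity''), and your instinct that one must ``rule out a controllable set whose minimal markable partition is trivial yet which is neither commuting nor markable'' is exactly right --- but such sets exist, so the step cannot be closed as stated. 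Take $\setname{R}=\{H,\,HD_{\pi/2},\,HD_{\pi/3}\}$ with $H$ the Hadamard and $D_\alpha=\mathrm{diag}(1,e^{i\alpha})$: every element maps $\ket{0}$ to $\ket{+}$, so Eq.~\eqref{eq:control} holds with the trivial partition and the set is controllable (this is precisely the second family of examples in the introduction, cf.~Refs.~\cite{ZRKZPLO11,FDDB14}); no two elements commute; and since $\Tr[(HD_\alpha)^\dagger HD_\beta]=1+e^{i(\beta-\alpha)}\neq 0$ for these phases, no pair is single-use perfectly discriminable, hence by the necessary condition for markability no non-trivial partition is markable. What the trivial-partition case actually yields is that the set $\{V^\dagger U:\,U,V\in\setname{R}\}$ of \emph{relative} unitaries is commuting (they all fix $\ket{\psi}$, hence are simultaneously diagonal), which for a qubit is strictly weaker than commutativity of $\setname{R}$ itself whenever $[\phi]\neq[\psi]$. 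So either ``commuting'' must be read in that relative sense, or the statement needs a third alternative (a common input state sent by all elements to a common output ray); your proposal, like the paper's proof, does not bridge this.
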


\begin{proof}
  If  the  set  $\setname{R}$  is markable,  the  conditions  of
  Proposition~\ref{prop:control} are  met since the  elements of
  $\setname{P_0}$  commute, while  multiplying two  off diagonal
  matrices in  $\setname{P_1}$ provides a diagonal  one.  If the
  minimal   partition   of   $\setname{R}$  is   trivial,   then
  controllability    is   equivalent    to   commutativity    of
  $\setname{R}$.
\end{proof}

\end{document}